\renewcommand{\vec}[1]{\boldsymbol{#1}}
\newcommand{\pd}[2]{\frac{\partial #1}{\partial #2}}
\newcommand{\PD}[2]{\frac{\partial #1}{\partial #2}}
\newcommand{\abs}[1]{\left\vert #1 \right\vert}
\newcommand{\inprod}[1]{\left \langle #1 \right \rangle}
\newcommand{\avg}[1]{\avg{#1}}
\newcommand{\lap}{\Delta}
\renewcommand{\epsilon}{\varepsilon}
\newcommand{\paren}[1]{\left(#1\right)}
\newcommand{\brac}[1]{\left[#1\right]}
\newcommand{\lt}{p_{\text{\tiny{LT}}}}
\renewcommand{\lq}{\tilde{h}}
\newcommand{\st}{p_{\text{\tiny{ST}}}}
\newcommand{\lst}{\tilde{p}_{\text{\tiny{ST}}}}
\newcommand{\lstreg}{\tilde{\varphi}}
\newcommand{\lG}{\tilde{G}}
\newcommand{\lR}{\tilde{R}}
\newcommand{\lambdalt}{\lambda_{\text{\tiny{LT}}}}
\newcommand{\streg}{\varphi}
\newcommand{\veta}{\vec{\eta}}
\newcommand{\bx}{\vec{x}}
\newcommand{\by}{\vec{y}}
\newcommand{\br}{\vec{r}}
\newcommand{\bro}{\vec{r}_{0}}
\newcommand{\brb}{\vec{r}_{\rm{b}}}
\newcommand{\ax}[2]{#1^{(#2)}}
\newcommand{\axrho}[1]{\ax{\st}{#1}}
\newcommand{\khat}{\hat{k}}
\newcommand{\omfree}{\Omega_{\text{free}}}
\newcommand{\ombind}{\Omega_{\epsilon}}
\newcommand{\ind}{\mathbbm{1}}
\newcommand{\gamhat}{\hat{\gamma}}
\newcommand{\muhat}{\hat{\mu}}
\newcommand{\ks}{k_{\textrm{S}}}
\newcommand{\kshat}{\hat{k}_{\textrm{S}}}
\newcommand{\kr}{k_{\textrm{R}}}
\newcommand{\krhat}{\hat{k}_{\textrm{R}}}
\newcommand{\kd}{k_{\textrm{D}}}
\newcommand{\kdhat}{\hat{k}_{\textrm{D}}}
\newcommand{\psiir}{\phi_{\textrm{R}}}
\newcommand{\psiid}{\phi_{\textrm{D}}}
\newcommand{\psiirinf}{\phi_{\textrm{R},\infty}}
\newcommand{\psiisinf}{\phi_{\textrm{S},\infty}}
\newcommand{\psiidinf}{\phi_{\textrm{D},\infty}}
\newcommand{\psiiinf}{\phi_{\infty}}
\newcommand{\psio}{\psi}
\newcommand{\psii}{\phi}
\newcommand{\winf}{w_{\infty}}
\newcommand{\peps}{p_{\epsilon}}
\newcommand{\rhoeps}{\rho_{\epsilon}}
\def\R{\mathbb{R}}
\newcommand{\psibar}{\bar{\Psi}}
\theoremstyle{plain}
\newtheorem{PR}{Principal Result}[section]
\newtheorem{theorem}{Theorem}[section]
\newtheorem{lemma}{Lemma}[section]
\newtheorem{remark}{Remark}[section]
\begin{document}

\title{Uniform asymptotic approximation of diffusion to a small target: generalized
reaction models}

\author{Samuel A. Isaacson\thanks{Department of Mathematics and Statistics, Boston University (isaacson@math.bu.edu)} \and Ava Mauro\thanks{Department of Mathematics and Statistics, University of Massachusetts Amherst (mauro@math.umass.edu)} \and Jay Newby\thanks{Mathematics Department, University of North Carolina, Chapel Hill (jaynewby@email.unc.edu)}}

\numberwithin{equation}{section}

\numberwithin{equation}{section}
\renewcommand{\theequation}{\arabic{section}.\arabic{equation}}

\maketitle

\begin{abstract}
  The diffusion of a reactant to a binding target plays a key role in many biological processes. The reaction-radius at which the reactant and target may interact is often a small parameter relative to the diameter of the domain in which the reactant diffuses.  We develop uniform in time asymptotic expansions in the reaction-radius of the full solution to the corresponding diffusion equations for two separate reactant-target interaction mechanisms: the Doi or volume reactivity model, and the Smoluchowski-Collins-Kimball partial absorption surface reactivity model. In the former, the reactant and target react with a fixed probability per unit time when within a specified separation. In the latter, upon reaching a fixed separation, they probabilistically react or the reactant reflects away from the target.  Expansions of the solution to each model are constructed by projecting out the contribution of the first eigenvalue and eigenfunction to the solution of the diffusion equation, and then developing matched asymptotic expansions in Laplace-transform space.  Our approach offers an equivalent, but alternative, method to the pseudo-potential approach we previously employed in~\cite{IsaacsonNewby2013} for the simpler Smoluchowski pure absorption reaction mechanism. We find that the resulting asymptotic expansions of the diffusion equation solutions are identical with the exception of one parameter: the diffusion limited reaction rates of the Doi and partial absorption models. This demonstrates that for biological systems in which the reaction-radius is a small parameter, properly calibrated Doi and partial absorption models may be functionally equivalent.
\end{abstract}

\normalsize
\section{Introduction}
A variety of bimolecular reaction mechanisms have been used in particle-based stochastic reaction-diffusion models of biological systems. The Doi model assumes two molecules may react with a fixed probability per unit time when their separation is less than some reaction-radius~\cite{DoiSecondQuantA,DoiSecondQuantB,TeramotoDoiModel1967}. In the Smoluchowski model molecules either react instantly when their separation equals the reaction-radius (pure absorption reaction)~\cite{SmoluchowskiDiffLimRx}, or have a probability of reflection upon collision (Smoluchowski--Collins--Kimball partial absorption reaction)~\cite{CollinsKimballPartialAdsorp}.

For each of these reaction models, analytic solutions have been derived and investigated in free space (see the many references in~\cite{PrustelAreaReactModel14}). Inside cells, reactions occur within closed subdomains with complex boundaries. In many such domains the reaction-radius is a small parameter relative to the diameter of the cellular domain.  We previously developed a method for calculating uniform in time asymptotic expansions of the solution to the pure absorption Smoluchowski model for diffusion to a fixed target within closed, three-dimensional domains~\cite{IsaacsonNewby2013}.  In this work we extend our previous study to develop uniform in time asymptotic expansions of the solution to the diffusion equation for targets with both Doi and Smoluchowski partial absorption reaction models.  Our new results are constructed using matched asymptotic expansions in Laplace-transform space, which offers an equivalent, but alternative, method to the pseudo-potential approach we previously employed.  The matched-asymptotics approach has been widely used for related problems that calculate various statistics of first passage times~\cite{ward93a,cheviakov10b,LindsayPRE2015,WardCheviakov2011,Chevalier2010ii}. It is well--suited for handling different reaction models because the solution is constructed from two parts: an inner solution that accounts for behavior near the partially absorbing target while ignoring the detailed shape of the domain boundary, and an outer solution that accounts for the reflecting domain boundary while ignoring the fine-scale details of the target reaction mechanism and surface. As we illustrate in the results section, one benefit to our approach is that the uniform in time expansion of the full diffusion equation we derive can be used to calculate corresponding asymptotic expansions in the reaction-radius of both the first passage time density and moments of the first passage time.

While we focus on the case of a single spherical target in this work, it has been demonstrated that similar expansions of the mean first passage time for a reaction to occur with a target can be extended to problems with multiple, competing reactive targets~\cite{cheviakov10b}. We while we do not study the case of multiple targets, we expect our uniform in time expansions of the diffusion equation could be generalized to such problems.

The mathematical problem we consider is diffusion of a molecule within a bounded domain $\Omega\subset \R^{3}$, containing a small, spherical target, $\ombind \subset \Omega$, with radius $\epsilon$ centered at $\brb \in \Omega$.  We denote by $\partial \Omega$ the exterior boundary surface to $\Omega$, and by $\partial \ombind$ the exterior boundary to $\ombind$. The non-target portion of $\Omega$ is denoted by $\omfree = \Omega\setminus \{ \ombind \cup \partial \ombind \}$. We assume the molecule moves by Brownian motion within $\omfree$, with position given by $\vec{R}(t)$. Denote by $p(\br,t)$ the probability density that $\vec{R}(t) = \br \in \omfree$, and the molecule has not yet ``bound'' to the target. Finally, let $D$ label the diffusion constant of the molecule.  In this work we consider three distinct models for the binding of the molecule to the target.

The first model is the (pure-absorption) Smoluchowski diffusion limited reaction model~\cite{SmoluchowskiDiffLimRx}, where the molecule instantaneously reacts with probability one the moment it reaches the target boundary.  In this case $p(\br,t)$ satisfies the diffusion equation with a Neumann boundary condition on the domain boundary,
\begin{subequations}
\label{eq:smolPDEModel}
  \begin{align}
    \label{eq:smolPDE}
    &\pd{p}{t} = D\lap p(\br,t), \quad\br \in \omfree, \, t > 0,\\
    &\partial_{\veta}p(\br,t) = 0,  \quad\br \in \partial\Omega, \, t > 0, \label{eq:smolPDENeumBC}
  \end{align}
\end{subequations}
with the initial condition $p(\br,0) = \delta(\br - \br_0)$ for
$\br_0 \in \omfree$, and the binding reaction modeled by the
pure-absorption Dirichlet boundary condition
\begin{equation}
  \label{eq:smolBC}
  p(\br,t) = 0, \quad \br \in \partial\ombind,\, t > 0.
\end{equation}
In the preceding equations $\partial_{\veta}$ denotes the directional derivative in the inward normal direction, $\veta(\br)$, to the boundary at $\br$.

A uniform (in time) asymptotic asymptotic expansion of the solution to~\eqref{eq:smolPDEModel} as $\epsilon \to 0$ has recently been developed by the authors \cite{IsaacsonNewby2013}.  However, it is often desirable in a given model to include the possibility that the Brownian walker does not instantaneously bind with probability one upon reaching the target, i.e. there is a possibility the walker fails to bind to the target.  For example, proteins may have specific binding sites and complex three dimensional shapes that must come together in precise orientations for a reaction to occur.  Only a fraction of encounters may then result in a binding event, which is often approximated as a probabilistic event.  In this paper we consider two models that allow for the possibility of non-reactive encounters.

The first model replaces the pure-absorption Dirichlet boundary condition~\eqref{eq:smolBC} with a Smoluchowski--Collins--Kimball partial-absorption Robin boundary condition~\cite{CollinsKimballPartialAdsorp},
\begin{equation}
\label{eq:robinBC}
-D \partial_{\veta} p(\br,t) = \gamma p(\br,t), \quad \br \in \partial \ombind, \, t > 0.
\end{equation}
Here the Robin constant, $\gamma$, determines the intrinsic rate of the reaction~\cite{KeizerJPhysChem82} when the diffusing molecule is at the surface of the binding region. It typically has units of distance per time. The reactive Robin boundary condition model arises in many ways. For example, it can be interpreted as the limit of a steep potential barrier with small support, which must be surmounted before reaching a pure-absorption reactive boundary~\cite{IsaacsonErban2015,LindsayPRE2015}. To give a comparable asymptotic expansion of $p(\br,t)$ as $\epsilon \to 0$ to that of the pure absorption boundary condition, we assume
\begin{equation}
\label{eq:gamhatDef}
\gamma = \frac{D \gamhat}{\epsilon},
\end{equation}
where the non-dimensional constant, $\gamhat$, is independent of $\epsilon$.
 We may then rewrite the Robin boundary condition as
\begin{equation}
\label{eq:robinBCScaled}
- \partial_{\veta} p(\br,t) = \frac{\gamhat}{\epsilon} p(\br,t).
\end{equation}

In the Doi model~\cite{DoiSecondQuantA,DoiSecondQuantB,IsaacsonAgbanusi13} the reactive boundary conditions of the previous two models are replaced by an effective sink term. That is, \eqref{eq:smolPDE} is coupled to the PDE
\begin{equation}
  \label{eq:doiTerm}
    \pd{p}{t} = D\lap p(\br,t) - \mu p, 
    \quad \br \in \ombind,\, t > 0,
\end{equation}
within $\ombind$ through the assumption that $p$ and
$\partial_{\veta} p$ are continuous across $\partial \ombind$. To give
a comparable asymptotic limit of $p(\br,t)$ as $\epsilon \to 0$ to those of the reactive boundary condition models (assuming $\br$ and $\bro$ are $O(1)$ distance from the target), we assume
\begin{equation}
  \label{eq:lambdahatDef}
  \mu = \frac{ D \muhat}{\epsilon^2},
\end{equation}
where the non-dimensional constant, $\muhat$, is independent of $\epsilon$.  We note that both scalings \eqref{eq:gamhatDef} and \eqref{eq:lambdahatDef} are necessary for each reaction mechanism to be {\em partially} absorbing as $\epsilon\to 0$.  Different scaling choices lead either to no absorption (e.g., if the rates are $O(1)$) or perfect absorption, which is equivalent to the pure absorption Smoluchowski model. Using~\eqref{eq:lambdahatDef}, we may rewrite~\eqref{eq:doiTerm} as
\begin{equation}
  \label{eq:doiTermScaled}
    \pd{p}{t} = D\lap p(\br,t) - \frac{D \muhat}{\epsilon^2} p, 
    \quad \br \in \ombind,\, t > 0.
\end{equation}

The paper is organized as follows. First, we develop the uniform asymptotic approximation in Section \ref{sec:unif-asympt-appr}. A new feature of this work is the use of matched asymptotic expansions in Laplace-transform space to develop the ``short-time'' component of the asymptotic expansion. This component corresponds to the solution of~\eqref{eq:smolPDEModel} with one of the reactive mechanisms~\eqref{eq:smolBC}, \eqref{eq:robinBCScaled} or~\eqref{eq:doiTermScaled}, but with the first eigenfunction contribution to the initial condition projected out. In Ref.~\cite{IsaacsonNewby2013} we developed this expansion for the pure-absorption boundary condition~\eqref{eq:smolBC} by replacing the boundary condition with an appropriately calibrated pseudo-potential operator and a subsequent perturbation expansion of the new PDE.  In this paper, we develop expansions that establish the equivalence of the pseudo-potential approach to matched asymptotic expansions in Laplace-transform space through terms of $O(\epsilon^2)$.  We find that when the diffusion limited reaction rates of the partial absorption~\eqref{eq:robinBCScaled} and Doi~\eqref{eq:doiTermScaled} reaction models are calibrated to be the same, the resulting outer expansions of the solutions to the corresponding diffusion equations, and hence also outer expansions of any first passage time statistics, are identical.

Finally, in Section~\ref{sec:numericalResults} we show results for a spherical domain.  We compare the asymptotic approximation of $p(\br,t)$ to the exact solution of the spherically symmetric problem (i.e. when the target is at the center and the initial position is uniformly distributed over the sphere of radius $r_{0}$), and to Monte Carlo simulations when the target is not centered. These results demonstrate that our asymptotic expansions are able to capture short time effects, including multimodality in the first passage time density, which are not present in long time expansions (where the first passage time density is approximated as exponential).

\section{Uniform asymptotic approximation}
\label{sec:unif-asympt-appr}

Let $-\lambda$ denote the principal, e.g. smallest magnitude, eigenvalue of the generator for the diffusion problem~\eqref{eq:smolPDE} with reaction mechanism~\eqref{eq:smolBC}, \eqref{eq:robinBCScaled} or~\eqref{eq:doiTermScaled}. That is, $\lambda$ is the principal eigenvalue of the Laplacian operator, $-\lap$, with the boundary condition~\eqref{eq:smolPDENeumBC} and the reactive boundary condition~\eqref{eq:smolBC} for the pure absorption Smoluchowski model, or the reactive boundary condition~\eqref{eq:robinBCScaled} for the partial absorption Smoluchowski--Collins--Kimball model. In the Doi model the generator is the operator
\begin{equation*}
  -\lap + \frac{\muhat}{\epsilon^2} \ind_{\brac{0,\epsilon}}(\br) 
\end{equation*}
with the boundary condition~\eqref{eq:smolPDENeumBC}. Here $\ind_{\Omega_{\epsilon}}(\br)$ denotes the indicator function of $\Omega_{\epsilon}$.

Our basic approach is to first split $p(\br,t)$ into two components: a large time approximation that will accurately describe the behavior of $p(\br,t)$ for $\lambda D t \gg 1$, and a short time correction to this approximation when $\lambda D t \not\gg 1$.  Note, both are defined for all times, but the latter approaches zero as $t \to \infty$, and so only provides a significant contribution for $\lambda D t \not \gg 1$.  It should be stressed that the short time correction is not an asymptotic approximation of $p(\br,t)$ as $t \to 0$, but instead serves as a correction to the long time expansion for $\lambda D t \not \gg 1$.  We write $p(\br,t)$ as
\begin{equation}
  p(\br,t) = \lt(\br, t) + \st(\br,t),
\end{equation}
where $\lt$ is the ``large time'' approximation and $\st$ is the ``short time'' correction.  

We define the large time approximation and short time correction through an eigenfunction expansion of $p(\br,t)$.  Assume the existence of an orthonormal $L^2$ basis of eigenfunctions for the generator. Let $\psi_{j}$ label the eigenfunctions and $\lambda_{j}$ the eigenvalues, and consider the eigenfunction expansion
\begin{equation*}
  p(\br, t) = \sum_{j=0}^{\infty}c_{j}\psi_{j}(\br) e^{-\lambda_{j} D t},
\end{equation*}
where
\begin{equation*}
  c_{j} = \int_{\Omega'}p(\br, 0)\psi_{j}(\br) d\br.
\end{equation*}
Here $\Omega'$ denotes the portion of $\Omega$ in which the free
reactant can diffuse. For the pure and partial absorption models
$\Omega' = \omfree$, while in the Doi model $\Omega' = \Omega$.
With $p(\br, 0) = \delta(\br-\bro)$, we obtain
\begin{equation*}
  p(\br, t) = \sum_{j=0}^{\infty}\psi_{j}(\bro)\psi_{j}(\br) 
  e^{-\lambda_{j} D t}.
\end{equation*}
For simplicity, in the remainder we denote the principal eigenfunction
and corresponding eigenvalue as $\psi(\br) := \psi_{0}(\br)$ and
$\lambda := \lambda_{0}$.  

We will choose the long time approximation to correspond to the first
mode of the eigenfunction expansion of $p(\br,t)$, that is
\begin{equation} \label{eq:ltDef}
 \lt(\br,t) := \psi(\br) \psi(\bro)e^{-\lambda D t}.
\end{equation}
With this choice, $\lt$ and $\st$ satisfy the projected initial
conditions
\begin{align}
  \lt(\br,0) &= \inprod{\psi(\br),\delta(\br-\bro)}\psi(\br), \\
 \st(\br,0) &=  \delta(\br-\bro)-\psi(\br)\psi(\bro), \label{eq:initConditShortTime}
\end{align}
where $\inprod{\cdot,\cdot}$ denotes the $L^2$ inner product,
\begin{equation*}
  \inprod{u(\br),w(\br)} = \int_{\Omega'} u(\br)w(\br)d\br.
\end{equation*}
By adding the two solutions, we see that the original initial condition is satisfied.

In~\cite{IsaacsonNewby2013} we used this splitting to aid in determining uniform in time asymptotic expansions of the solution to the pure-absorption Dirichlet problem.  One benefit to this approach is that the long time approximation of the Dirichlet, Robin and Doi problems can be obtained by adapting the matched asymptotics method developed in~\cite{ward93a,cheviakov10b}.

In the remainder of this section, we show using the method of matched asymptotics that the short- and long-time parts of the uniform outer expansions as $\epsilon \to 0$ of the Dirichlet, Robin and Doi problems are identical in form. We find that they differ by only a single parameter: the diffusion limited reaction rate.

In some applications, the detailed spatial dynamics within each of the preceding models can be ignored, and the reactive process can be characterized as a well-mixed bimolecular reaction with rate constant given by this diffusion limited rate. For the pure-absorption reactive boundary condition, Smoluchowski developed a popular method for deriving the diffusion limited rate (for reactions in an unbounded domain) as function of the microscopic parameters of the preceding models~\cite{SmoluchowskiDiffLimRx}. Denote by $\ks$ the diffusion limited rate. Smoluchowski obtained that $\ks = \kshat \epsilon$, where
\begin{equation}
  \label{eq:smolDiffLimRate}
  \kshat = 4 \pi D.
\end{equation}
Similar expressions have been derived for the Robin boundary condition
reaction model~\cite{KeizerJPhysChem82}, where $\kr = \krhat \epsilon$
with
\begin{equation}
  \label{eq:robDiffLimRate}
  \krhat = \frac{4 \pi D \gamhat}{1 + \gamhat}.
\end{equation}
A diffusion limited reaction rate for the Doi reaction
mechanism~\eqref{eq:doiTermScaled} was derived
in~\cite{ErbanChapman2009}, $\kd = \kdhat \epsilon$ with
\begin{equation}
  \label{eq:doiDiffLimRate}
  \kdhat = 4 \pi D \paren{1 - \frac{\tanh\paren{\sqrt{\muhat}}}{\sqrt{\muhat}}}.
\end{equation}
Note, in the limit that $\gamhat \to \infty$,
$\kr \to \ks$~\cite{KeizerJPhysChem82}. Similarly, as
$\muhat \to \infty$, $\kd \to \ks$~\cite{IsaacsonAgbanusi13}. The
pure-absorption reaction model may therefore be interpreted as a
limiting case of the partial-absorption and Doi reaction models.

\subsection{$\epsilon = 0$ solutions}
We will find it convenient to represent the desired asymptotic
expansions in terms of solutions to both time-dependent and stationary
$\epsilon = 0$ problems. We denote by $G(\br,\br',t)$ the solution of
the $\epsilon = 0$ limit of~\eqref{eq:smolPDEModel}, i.e. the Green's
function satisfying
\begin{equation} \label{eq:diffGreensFunc}
  \begin{aligned}
    \PD{G}{t}(\br,\br',t) &= D \lap G, \quad \br \in \Omega, \, t > 0, \\
    \partial_{\veta} G(\br,\br',t) &= 0, \quad \br \in \partial \Omega, \, t > 0, \\
    G(\br,\br',0) &= \delta(\br-\br').
  \end{aligned}
\end{equation}
In later calculations we will often write $G$ in terms of a part that is regular (bounded) as $t \to 0$ for all $\br$ and $\br'$, $R(\br,\br',t)$, and a corresponding singular part (the freespace Green's function with delta function initial condition),
\begin{equation*}
  G(\br,\br',t) = R(\br,\br',t) + \frac{1}{\paren{4 \pi D t}^{3/2}} e^{-\abs{\br-\br'}^2/4 D t}.
\end{equation*}

We will also make use of the unique Neumann function, or pseudo-Green's
function, denoted by $U(\br,\br')$ and satisfying
\begin{equation}
  \label{eq:NeumannFunction}
  \begin{aligned}
    -D\lap_{\br} U(\br,\br') &= -\frac{1}{\abs{\Omega}} + \delta(\br - \br'), && \br \in  \Omega, \\
    \partial_{\veta} U(\br,\br') &= 0, && \br \in \partial \Omega, \\
    \int_{\Omega} U(\br,\br') \, d\br &= 0.
  \end{aligned}
\end{equation}
We similarly split $U(\br,\br')$ into a part that is regular for all
$\br$ and $\br'$, $R_0(\br,\br')$, and a singular part as
$\br \to \br'$,
\begin{equation}
\label{eq:UExpSingPart}
U(\br,\br') = R_0(\br,\br') + \frac{1}{\kshat \abs{\br - \br'}},
\end{equation}
see~\cite{WardCheviakov2011}. 

We subsequently denote the Laplace transform of a function, $f(t)$, by 
\begin{equation*}
  \mathcal{L}\brac{f}(s)  = \tilde{f}(s) := \int_0^{\infty} f(t) e^{-s t} \, dt, 
\end{equation*}
so that
\begin{equation*}
    \lG(\br,\br',s) = \lR(\br,\br',s) + \frac{e^{-\abs{\br-\br'} \sqrt{\frac{s}{D}}}}{\kshat \abs{\br-\br'}}.
\end{equation*}
In what follows we will make use of the basic
identities
\begin{align}
  \lim_{s\to 0} \, \mathcal{L} \brac{G(\br,\br',t) - \frac{1}{\abs{\Omega}}} &= 
     \int_0^{\infty} \paren{G(\br,\br',t) - \frac{1}{\abs{\Omega}}} \, dt = U(\br,\br'), \label{eq:UasIntOfG}\\
  \lim_{s \to 0} \, \mathcal{L} \brac{R(\br,\br',t) - \frac{1}{\abs{\Omega}}}  &= 
     \int_0^{\infty} \paren{R(\br,\br',t) - \frac{1}{\abs{\Omega}}} \, dt = R_0(\br,\br'), \notag \\
  \int_0^{\infty} \frac{1}{\paren{4 \pi D t}^{3/2}} e^{-\abs{\br-\br'}^2/4 D t} \, dt &= \frac{1}{\kshat \abs{\br-\br'}}. \notag
\end{align}
The first identity follows by replacing $G(\br,\br',t)$ by $G(\br,\br',t) - \abs{\Omega}^{-1}$ in~\eqref{eq:diffGreensFunc}, and integrating in time. The second follows by applying the first and third identities to the representations of $G(\br,\br',t)$ and $U(\br,\br')$ in terms of smooth and singular parts.  For readers interested in more detailed derivations of the first two identities see~\cite{WardCheviakov2011,IsaacsonNewby2013}.

\subsection{Large time component asymptotic expansion}
\label{sec:unif-asympt-appr-long-time}
From the eigenfunction expansion of $p(\br,t)$, in each model we expect for long times that the solution, $p(\br,t)$, to~\eqref{eq:smolPDE}, should be well-approximated by the corresponding first term in the eigenfunction expansion,
\begin{equation*}
  p(\br,t) \sim \lt(\br,t) = \psi(\br) \psi(\bro) e^{-\lambda D t}, \quad t \to \infty.
\end{equation*}
In this section we apply the matched asymptotics approach developed in~\cite{ward93a,WardCheviakov2011} for calculating asymptotic expansions as $\epsilon \to 0$ of $\psi(\br)$ and $\lambda$ in the pure-absorption Smoluchowski model to the Robin boundary condition and Doi reactive sink models. We derive analogous expansions for the latter two models to those presented in~\cite{WardCheviakov2011,IsaacsonNewby2013} for the pure-absorption Smoluchowski model. Note, a number of results regarding the leading order term for the Robin problem were previously derived in \cite{ward93a}, and can be extended to give the inner solution we obtain, \eqref{eq:9}, using the capacitance for the Robin problem derived in~\cite{ward93a} within the expansions of~\cite{WardCheviakov2011}.

The principal eigenfunction and eigenvalue satisfy
\begin{equation}
  \label{eq:eigenFuncEq}
  \begin{aligned}
    -\lap \psi(\br) &= \lambda \psi(\br), && \br \in \omfree, \notag\\
    \partial_{\veta} \psi(\br) &= 0,  &&\br \in \partial\Omega, \notag \\
    \int_{\omfree} \abs{\psi(\br)}^2 \, d\br &= 1, \notag
  \end{aligned}
\end{equation}
with the reactive boundary condition
\begin{equation}
  \label{eq:robEigenFuncBC}
  -\partial_{\veta} \psi(\br) = \frac{\gamhat}{\epsilon} \psi(\br), \quad \br \in \partial \ombind, 
\end{equation}
in the Robin model, and the reactive sink term
\begin{equation}
  \label{eq:doiEigenFuncTerm}
  -\lap \psi(\br) + \frac{\muhat}{\epsilon^2} \psi(\br) = \lambda \psi(\br), \quad \br \in \ombind,
\end{equation}
in the Doi model (coupled with continuity of $\psi(\br)$ and
$\partial_{\veta} \psi(\br)$ across $\partial \ombind$). We assume the
eigenfunction is normalized in $L^2(\omfree)$ (resp. $L^2(\Omega)$)
for the Smoluchowski and Robin (resp. Doi) models.

For each model we seek an expansion,
\begin{equation}
  \label{eq:psiExpansion}
   \psi(\br) \sim \frac{1}{\sqrt{\abs{\Omega}}} + \epsilon \, \psi^{(1)}(\br)+\epsilon^{2}\, \psi^{(2)}(\br), \quad \epsilon \to 0,
\end{equation}
where the leading order term is given by the principal eigenfunction
of the $\epsilon =0$ problem (\textit{i.e.} the principal
eigenfunction of $-\lap$ on $\Omega$ with zero Neumann boundary
conditions on $\partial \Omega$). Similarly, we seek a corresponding
expansion of the principal eigenvalue,
\begin{equation}
  \label{eq:lambdaExpansion}
   \lambda \sim  \epsilon \, \lambda^{(1)}+\epsilon^{2}\, \lambda^{(2)}, \quad \epsilon \to 0. 
\end{equation}
Since the principal eigenvalue of $-\lap$ with zero Neumann boundary
conditions on $\partial \ombind$ is zero, the leading order expansion
of $\lambda$ will be
$O(\epsilon)$~\cite{WardCheviakov2011,IsaacsonNewby2013}.

Following~\cite{WardCheviakov2011} we construct an outer expansion of
the principal eigenfunction with the form~\eqref{eq:psiExpansion}, and
a corresponding inner expansion of $\psi(\br)$ near the target,
denoted by $\psii$.  We assume
$\psio^{(0)}(\br) = \abs{\Omega}^{-1/2}$, consistent with the limit of
$\psi(\br)$ as $\epsilon \to 0$. To derive an expansion of $\psii$, we
change to a coordinate system near the reactive boundary/region
\begin{equation*}
  \by = \frac{\br - \brb}{\epsilon},
\end{equation*}
giving the inner problem
\begin{equation} \label{eq:pefInnerEq}
    -\lap_{\by} \psii(\by) = \epsilon^2 \lambda \psii(\by), \quad \abs{\by} > 1,
\end{equation}
with $\psii(\by)$ satisfying either the reactive Robin boundary
condition
\begin{equation}
  \label{eq:pefInnerExpRob}
  - \partial_{\veta} \psii (\by) = \gamhat \psii(\by), \quad \abs{\by}=1,
\end{equation}
on the unit sphere, or satisfying the eigenvalue equation with a
reactive sink inside the unit sphere,
\begin{equation*}
  -\lap_{\by} \psii(\by) + \muhat \psii(\by) = \epsilon^2 \lambda \psii(\by) , \quad \abs{\by} < 1 
\end{equation*}
(with continuity of $\psii(\by)$ and $\partial_{\veta} \psii(\by)$ across the
unit sphere). We then consider the expansion
\begin{equation*}
\psii(\by) \sim \psii^{(0)}(\by) + \epsilon \psii^{(1)}(\by) + \epsilon^2 \psii^{(2)}(\by), \quad \epsilon \to 0.
\end{equation*}
Substitution into~\eqref{eq:pefInnerEq} gives
\begin{equation} \label{eq:pefInnerExpEq}
    -\lap_{\by} \psii^{(i)}(\by) = 0, \quad \abs{\by} > 1,
\end{equation}
with the reactive boundary condition~\eqref{eq:pefInnerExpRob} for
each $\psii^{(i)}$, or the reactive sink equation within the unit
sphere
\begin{equation}
  \label{eq:pefInnerExpDoi}
  -\lap_{\by} \psii^{(i)}(\by) + \muhat \psii^{(i)}(\by) = 0, \quad \abs{\by} < 1. 
\end{equation}

Let $\psiiinf^{(i)} = \lim_{\abs{\by} \to \infty}
\psii^{(i)}(\by)$.
The solution to~\eqref{eq:pefInnerExpEq} with the reactive boundary
condition~\eqref{eq:pefInnerExpRob} is then
\begin{equation}
\label{eq:9}
\psiir^{(i)} (\by) = \psiirinf^{(i)} \brac{1 - \frac{\krhat}{4 \pi D} \frac{1}{\abs{\by}}}, \quad \abs{\by} > 1,
\end{equation}
where $\krhat$ is given by \eqref{eq:robDiffLimRate}.
The corresponding solution to~\eqref{eq:pefInnerExpEq} with the Doi
reaction model~\eqref{eq:pefInnerExpDoi} is then
\begin{equation}
\label{eq:3}
\psiid^{(i)} (\by) =
\begin{cases}
  \psiidinf^{(i)} \paren{1 - \frac{\kdhat}{4 \pi D}}
  \frac{\sinh\paren{\sqrt{\muhat}\, \abs{\by}}}{\sinh\paren{\sqrt{\muhat}} \abs{\by}}, & \abs{\by} < 1,\\
  \psiidinf^{(i)} \brac{1 - \frac{\kdhat}{4 \pi D} \frac{1}{\abs{\by}}}, & \abs{\by} > 1,
\end{cases}
\end{equation}
where $\kdhat$ is given by \eqref{eq:doiDiffLimRate}.

Notice, for $\abs{\by} > 1$, the terms of the inner expansion for each
of the two models differ only in the diffusion limited rate that
appears. This also holds for the pure-absorption Smoluchowski
model~\cite{WardCheviakov2011,IsaacsonNewby2013}, and as such we
subsequently consider
\begin{equation*}
  \psii^{(i)}(\by) = \psiiinf^{(i)} \brac{1 - \frac{\khat}{4 \pi D} \frac{1}{\abs{\by}}}, \quad \abs{\by} > 1,
\end{equation*}
where $\khat \in \{ \kshat, \krhat, \kdhat \}$ and $\psiiinf^{(i)} \in \{
\psiisinf^{(i)}, \psiirinf^{(i)}, \psiidinf^{(i)} \}$ as
appropriate for each model.

We now develop the asymptotic expansion of the outer solution.  From
the perspective of the outer solution, the reactive surface/region
simply corresponds to the point, $\brb$.  As such, we find upon
substitution of the expansions~\eqref{eq:lambdaExpansion}
and~\eqref{eq:psiExpansion} that
\begin{equation*}
-\lap \psio^{(i)} (\br) = \sum_{j=1}^{i} \lambda^{(j)} \psio^{(i-j)} (\br) , \quad \br \in \Omega \setminus \{\brb\}, \, i \geq 1,
\end{equation*}
with $\partial_{\veta} \psio^{(i)}(\br) = 0$ on $\partial \Omega$. The
assumption that $\psi(\br)$ is normalized in the two-norm implies that
\begin{align} 
  \int_{\Omega} \psio^{(1)}(\br) \, d\br &= 0, \label{eq:psioExp1Norm} \\
  \int_{\Omega} \psio^{(2)}(\br) \, d\br &= - \frac{\sqrt{\abs{\Omega}}}{2} \int_{\Omega} \abs{\psio^{(1)}(\br)}^2 \, d\br. \label{eq:psioExp2Norm}
\end{align}

To determine the unknown constants, $\psiiinf^{(i)}$, the expansion of
$\lambda$, and the outer expansion of the principal eigenfunction,
$\psio(\br)$, we match the behavior of the expansion of $\psii(\by)$
as $\abs{\by} \to \infty$ to the behavior of the expansion
$\psio(\br)$ as $\br \to \brb$. We note the determination of these
expansions is identical to the pure-absorption Dirichlet boundary
condition considered in~\cite{WardCheviakov2011}, with the only change
in the final expansions a modified $\khat$ for the Robin and Doi
models.  For completeness we now summarize this procedure, but refer
the interested reader to~\cite{WardCheviakov2011} for complete details
of the derivation, and our previous work~\cite{IsaacsonNewby2013} for
a summary of the final expansions for the pure-absorption reaction
mechanism.

At zeroth order we first match $\psii^{(0)}(\by)$ as
$\abs{\by} \to \infty$ to $\psio^{(0)}(\br)$ as $\br \to 0$. In the
original coordinates we find
\begin{equation*}
\psii^{(0)}(\br) = \frac{1}{\sqrt{\abs{\Omega}}}\brac{1 - \frac{\khat}{4 \pi D} \frac{\epsilon}{\abs{\br-\brb}}}, \quad \abs{\br-\brb} > \epsilon.
\end{equation*}
$\psii^{(0)}(\br)$ then determines the singular behavior of $\psio^{(1)}(\br)$ as $\br \to \brb$,
\begin{equation*}
  \psio^{(1)}(\br) \sim -\frac{1}{\sqrt{\abs{\Omega}}} \frac{\khat}{4 \pi D} \frac{1}{\abs{\br-\brb}}, \quad \br \to \brb.
\end{equation*}
We therefore find that
\begin{equation} \label{eq:u1}
  \begin{aligned}
  -\lap \psio^{(1)} (\br) &= \frac{\lambda^{(1)}}{\sqrt{\abs{\Omega}}} - \frac{\khat}{D \sqrt{\abs{\Omega}}} \delta(\br - \brb), && \br \in \Omega,  \\
  \partial_{\veta} \psio^{(1)}(\br) &= 0, && \br \in \partial \Omega,   
  \end{aligned}
\end{equation}
with the normalization condition~\eqref{eq:psioExp1Norm}.  Integrating
this equation over $\Omega$ and applying the divergence theorem on the
left side then gives
\begin{equation*}
  \lambda^{(1)} = \frac{\khat}{D \abs{\Omega}}.
\end{equation*}
Using the Neumann function, $U(\br,\brb)$, we may solve~\eqref{eq:u1}
to find
\begin{equation*}
\psio^{(1)}(\br) = -\frac{\khat}{\sqrt{\abs{\Omega}}} U(\br,\brb). 
\end{equation*}
Matching the regular part of $\psio^{(1)}(\br)$ as $\br \to \brb$ with
the behavior of $\psii^{(1)}(\by)$ as $\by \to \infty$, we find,
\begin{equation*}
  \psii^{(1)}(\br) = \frac{-\khat}{\sqrt{\abs{\Omega}}} R_0(\brb,\brb) 
  \brac{1 - \frac{\khat}{4 \pi D} \frac{\epsilon}{\abs{\br-\brb}}}, \quad \abs{\br-\brb} > \epsilon.
\end{equation*}
$\psio^{(2)}(\br)$ then has the singular behavior
\begin{equation*}
  \psio^{(2)}(\br) \sim \frac{\khat^2}{4 \pi D \sqrt{\abs{\Omega}}}  \frac{R_0(\brb,\brb)}{ \abs{\br-\brb}},
\end{equation*}
so that 
\begin{align*}
-\lap \psio^{(2)} (\br) &= \frac{\lambda^{(2)}}{\sqrt{\abs{\Omega}}} + \lambda^{(1)} \psio^{(1)}(\br) 
+ \frac{\khat^2}{D \sqrt{\abs{\Omega}}} R_0(\brb,\brb) \delta(\br-\brb), && \br \in \Omega,\\
\partial_{\veta} \psio^{(2)}(\br) &= 0, && \br \in \partial \Omega,
\end{align*}
with the normalization condition~\eqref{eq:psioExp2Norm}.  Integrating
over $\Omega$ and applying the divergence theorem gives
\begin{equation*}
  \lambda^{(2)} = -\frac{\khat^2}{D \abs{\Omega}} R_0(\brb,\brb),
\end{equation*}
so that
\begin{equation*}
  \psio^{(2)}(\br) = \frac{\khat^2}{\sqrt{\abs{\Omega}}} 
    \brac{  R_0(\brb,\brb) U(\br,\brb) - \frac{1}{\abs{\Omega}} \int_{\Omega} U(\br,\br') U(\br',\brb) \, d \br' } + \psibar.
\end{equation*}
Here $\psibar$ denotes the average of $\psio^{(2)}$ over $\Omega$, and
using~\eqref{eq:psioExp2Norm} is given by
\begin{equation*}
  \psibar = -\frac{\khat^2}{2 \abs{\Omega}^{3/2}} \int_{\Omega} \abs{U(\br,\brb)}^2 \, d\br.
\end{equation*}

In summary, we find 
\begin{PR}
  The asymptotic outer expansions of the principal eigenfunction and
  eigenvalue are
  \begin{align}
    \begin{split} \label{eq:efuncExpan}
      \psi(\br) &\sim \frac{1}{\sqrt{\abs{\Omega}}} \Bigg[ 1 - \epsilon \khat U(\br,\brb) - \epsilon^2 \khat^2
      \Bigg(-R_0(\brb,\brb) U(\br,\brb) \\
      &\phantom{\sim \frac{1}{\sqrt{\abs{\Omega}}} \Bigg[ 1 - \epsilon \khat U(\br,\brb) - \epsilon^2 \khat^2}
      + \frac{1}{\abs{\Omega}} \int_{\Omega} U(\br,\br') U(\br',\brb) \, d \br' \Bigg) \Bigg] + \epsilon^2 \psibar,
    \end{split} \\
    \lambda &\sim \lambdalt := \frac{\khat}{D \abs{\Omega}} (1 - \khat R_0(\brb,\brb) \epsilon) \epsilon,  \label{eq:evalExpan}
  \end{align}
  with $\khat \in \{\kshat,\krhat,\kdhat\}$ as
  appropriate. Substitution into~\eqref{eq:ltDef} gives the
  uniform in time asymptotic expansion of $\lt(\br,t)$, as given by
  equations~2.14--2.16 of~\cite{IsaacsonNewby2013}. Defining
\begin{equation}
  \label{eq:f1}
  \lt^{(1)}(\br, \bro,0)   = -\frac{\khat}{\abs{\Omega}} \Big( U(\br,\brb) + U(\bro,\brb) \Big),
\end{equation}
\begin{multline}
  \label{eq:f2}
  \lt^{(2)}(\br,\bro,0) = \frac{\khat^2}{\abs{\Omega}} U(\br,\brb) U(\bro,\brb) 
  + \frac{\khat^2 R(\brb,\brb) }{\abs{\Omega}} \left[U(\br,\brb) + U(\bro,\brb)\right] \\
  - \frac{\khat^2}{\abs{\Omega}^2} \int_{\Omega} \left[U(\br,\br') + U(\bro,\br')\right] U(\br',\brb) \, d \br' 
  + \frac{2 \psibar}{\sqrt{\abs{\Omega}}},
\end{multline}
we find the small $\epsilon$ expansion of $\lt(\br,t)$ is then
\begin{equation} \label{eq:ltDensityExpan}
  \lt(\br,t) \sim 
  \brac{ \frac{1}{\abs{\Omega}} + \lt^{(1)}(\br,\bro,0) \, \epsilon + \lt^{(2)}(\br,\bro,0) \, \epsilon^{2} }
  e^{-\lambdalt D t}.
\end{equation}
\end{PR}

\subsection{Short  time component  asymptotic expansion}
\label{sec:unif-asympt-appr-short-time}

We now develop an asymptotic expansion as $\epsilon \to 0$ of
$\st(\br,t)$, satisfying~\eqref{eq:smolPDEModel} with the projected
initial condition~\eqref{eq:initConditShortTime}, and one of the
reactive models~\eqref{eq:smolBC},\eqref{eq:robinBCScaled},
or~\eqref{eq:doiTermScaled}. Our approach differs from the methods we
developed for the pure-absorption Dirichlet problem~\eqref{eq:smolBC}
in~\cite{IsaacsonNewby2013}. There we replaced the reactive boundary
condition with a pseudo-potential operator in the underlying diffusion
equation~\eqref{eq:smolPDEModel}, corresponding to a singular
perturbation of the Laplacian at the center of the reactive target. In
contrast, we now develop matched asymptotic expansions of $\st(\br,t)$
in Laplace transform space, working directly with the appropriate
reaction model. To avoid the evaluation of a number of integrals we
previously calculated in~\cite{IsaacsonNewby2013}, we do not calculate
the expansions through term by term matching. Instead, we define one
inner solution and one outer solution that encompass all terms needed
to calculate the expansion of $\st(\br,t)$ through $O(\epsilon^2)$. We
then match these two solutions, and show that the outer solution
satisfies the same integral equation we found
in~\cite{IsaacsonNewby2013}, but with $\kshat$ replaced by the
appropriate $\khat$ for each reactive model. The analysis
of~\cite{IsaacsonNewby2013} then gives the corresponding regular
perturbation expansions of this integral equation, which in turn
provides the expansion of $\st(\br,t)$ through $O(\epsilon^2)$. Note,
while we take this approach to avoid repeating much of the analysis
of~\cite{IsaacsonNewby2013}, we expect one could alternatively match
inner and outer solutions in Laplace transform space term by term and
obtain the same final expansion (through $O(\epsilon^2)$).

We begin by applying the Laplace transform to the governing
equation~\eqref{eq:smolPDEModel}, giving
\begin{equation}
  \label{eq:smolPDEModelST}
  \begin{aligned}
    \paren{D \lap -s} \lst(\br,s) &= -\st(\br,0), \quad\br \in \omfree, \, s > 0,\\
    \partial_{\veta} \lst(\br,s) &= 0,  \quad\br \in \partial\Omega, \, s > 0,
  \end{aligned}
\end{equation}
where the projected initial condition, $\st(\br,0)$, is given
by~\eqref{eq:initConditShortTime}. Note that $\st(\br,0)$ will depend on
$\epsilon$ through the principal eigenfunction contribution,
$\psi(\br) \psi(\br_0)$.  
Equation~\eqref{eq:smolPDEModelST} is coupled to the
corresponding Laplace-transformed versions of the reactive
models~\eqref{eq:smolBC},\eqref{eq:robinBCScaled},
or~\eqref{eq:doiTermScaled}.

To construct an inner solution we change variables as in the previous
section, taking $\by = \frac{\abs{\br - \brb}}{\varepsilon}$. Denote
by $\sigma(\by,s)$ the corresponding inner expansion of $\lst(\br,s)$
near the reactive target. To determine an expansion of $\st(\br,t)$
through $O(\epsilon^2)$, we will require the first two terms in the
expansion of $\sigma(\by,s)$ (similar to how we only required
$\phi^{(0)}$ and $\phi^{(1)}$ in the previous section). Let
\begin{equation*}
  \sigma(\by,s) \sim w(\by,s) := w^{(0)}(\by,s) + \epsilon w^{(1)}(\by,s).
\end{equation*}
Assume $s = O(1)$. Substituting into~\eqref{eq:smolPDEModelST} and changing to the $\by$ coordinate,
we find $w(\by,s)$ satisfies
\begin{gather}
  \label{eq:20}
  \lap w = 0 , \quad \abs{\by} > 1,
\end{gather}
with the pure absorption Dirichlet boundary condition
\begin{equation*}
  w(\by,s) = 0, \quad \abs{\by} = 1,\\
\end{equation*}
the partial absorption Robin boundary condition
\begin{equation*}
  -\partial_{\veta} w(\by,s) = \gamhat w(\by,s), \quad \abs{\by} = 1,
\end{equation*}
or the Doi reaction mechanism
\begin{equation*}
  -\lap w(\by,s)  + \muhat w(\by,s) = 0, \quad \abs{\by} < 1.
\end{equation*}
In what follows, we will assume~\eqref{eq:20} is the correct asymptotic order equation for $w(\by,s)$ for all $s > 0$. That is, we assume any additional terms arising when $s$ is large, i.e.  $s = O(\epsilon^{-\beta})$ for $\beta > 0$, can be ignored. One might expect this assumption to ultimately lead to an incorrect final expansion of $p(\br,t)$ for short times. In Section~\ref{S:shorttimeproof} we demonstrate that for short times, i.e. $t = O(\epsilon^{\beta})$ with $\beta > 0$, the error introduced by this approximation does not change the asymptotic order of our final expansion of $p(\br,t)$.

For each reaction model $w$ satisfies the same equation as the inner eigenfunction expansions, $\phi^{(i)}(\by)$, studied in the previous section. We therefore conclude that
\begin{equation}
  \label{eq:21}
  w(\by,s) = \winf(s;\epsilon) \paren{1 - \frac{\khat}{\kshat} \frac{1}{\abs{\by}}}, \quad \abs{\by} > 1,
\end{equation}
where $\kshat = 4\pi D$ and the constant $\winf$ is determined
by matching to the outer solution and can be written
\begin{equation*}
  \winf(s;\epsilon) = \winf^{(0)}(s) + \epsilon \winf^{(1)}(s).
\end{equation*}

We abuse notation, and subsequently denote by $\lst(\br,s)$ the expansion of the outer problem solution through terms of second order, i.e.
\begin{equation*}
  \lst(\br,s) := \lst^{(0)}(\br,s) + \epsilon \lst^{(1)}(\br,s) + \epsilon^{2} \lst^{(2)}(\br,s),
\end{equation*}
where 
\begin{equation*}
  \lst^{(0)}(\br,s) = \mathcal{L} \brac{G(\br,\br_0,t) - \frac{1}{\abs{\Omega}}}
  = \lG(\br,\br_0,s) - \frac{1}{\abs{\Omega} s}
\end{equation*}
corresponds to the $\epsilon=0$ solution. The outer solution expansion
then satisfies
\begin{equation*}
  \begin{aligned}
    (D \lap - s) \lst(\br,s) &= -p_0(\br), \quad \br \in \Omega \setminus \{\brb\}, \\
    \partial_{\veta} \lst(\br,s) &= 0, \quad \br \in \partial \Omega,
  \end{aligned}
\end{equation*}
where $p_0(\br)$ denotes the truncated asymptotic expansion of
$\st(\br,0)$ as $\epsilon \to 0$ through terms of
$O(\epsilon^2)$. That is
\begin{align*}
  \st(\br,0) \sim p_0(\br) &:= \st^{(0)}(\br,0) + \epsilon \st^{(1)}(\br,0) + \epsilon^{2} \st^{(2)}(\br,0), \\
   &\phantom{:}= \delta(\br-\br_0) -  \brac{ \frac{1}{\abs{\Omega}} + \lt^{(1)}(\br,\bro,0) \, \epsilon + \lt^{(2)}(\br,\bro,0) \, \epsilon^{2} }.
\end{align*}
As in the previous section, the singular behavior of $w(\by,s)$ as
$\abs{\by} \to \infty$ determines the singular behavior of
$\lst(\br,s)$ as $\br \to \brb$. We find
\begin{equation} \label{eq:lstMatchSingular}
  \lst(\br,s) \sim -\frac{\epsilon \khat \, \winf(s;\epsilon)}{\kshat \abs{\br - \brb}}, \quad \br \to \brb,
\end{equation}
or equivalently, $\lst(\br,s)$ satisfies
\begin{equation}
  \label{eq:22}
  (D \lap - s) \lst(\br,s) = -p_{0}(\br) - \tilde{f}(s)\delta(\br - \brb),
\end{equation}
for some unknown source term, $\tilde{f}(s)$, that enforces the
desired singular behavior.  The solution is
\begin{equation}
  \label{eq:2}
  \lst(\br, s) = \int_{\Omega}\tilde{G}(\br, \br', s) p_{0}(\br')d\br'\\
+ \tilde{f}(s) \lG(\br,\brb,s).
\end{equation}
We rewrite $\lst$ by explicitly removing the singular behavior as $\br \to \brb$,
\begin{equation}
  \label{eq:25}
  \lst(\br, s) = \lstreg(\br, s) + \lq(s)U(\br, \brb),
\end{equation}
where $\lq(s)$ is also an unknown function. The formal justification
of this representation was shown in Appendix A
of~\cite{IsaacsonNewby2013} for the pure-absorption Dirichlet boundary
condition reactive model.

We now derive a well-defined integral equation for $\streg(\br, t)$,
to which regular perturbation theory can be applied to calculate the
asymptotic expansions of $\lstreg$, $\lq$, and hence $\lst$. As part
of our analysis we verify that $\lstreg(\br, s)$ is indeed bounded as
$\br\to\brb$ for each $s> 0$. 

The unknown function $\lq$ is determined by the matching
condition~\eqref{eq:lstMatchSingular}.  We find
\begin{equation}
  \label{eq:5}
  \winf(s;\epsilon) = \frac{-\lq(s) }{\epsilon\khat}.
\end{equation}
To match the inner and outer solutions, we require
\begin{equation*}
\lim_{\br \to \brb} \brac{\lst(\br,s) - \frac{\lq(s)}{\kshat \abs{\br-\brb}}} 
  = \lim_{\abs{\by} \to \infty} w(\by,s),
\end{equation*}
which can be rewritten as
\begin{equation}
  \label{eq:6}
   \lstreg(\brb, s) + \lq(s) R_{0}(\brb, \brb) = \winf(s;\epsilon).
\end{equation}
Combining \eqref{eq:5} and \eqref{eq:6} we obtain
\begin{equation}
  \label{eq:12}
  \lq(s) = \frac{-\epsilon  \khat \lstreg(\brb, s)}{1 + \epsilon \khat R_{0}(\brb, \brb)}.
\end{equation}
Hence,
\begin{equation}
  \label{eq:4}
    \lst(\br, s) = \lstreg(\br, s) - \frac{\epsilon\khat \lstreg(\brb, s)}{1 + \epsilon\khat R_{0}(\brb, \brb)} U(\br,\brb)
\end{equation}
is an asymptotic approximation to the solution
of~\eqref{eq:smolPDEModelST}.  Substituting \eqref{eq:25} into
\eqref{eq:2} yields
\begin{equation}
  \label{eq:26}
  \lstreg(\br, s) + \lq (s) U(\br,\brb) = \int_{\Omega}\tilde{G}(\br, \br', s) p_{0}(\br')d\br'
 + \tilde{f}(s) \lG(\br,\brb,s). 
\end{equation}
In Appendix~\ref{ap:finiteInt} we show that
\begin{equation} \label{eq:finiteInt}
 \lim_{\br \to \brb} \abs{\int_{\Omega}\tilde{G}(\br, \br', s) p_{0}(\br')d\br'} < \infty, \quad \forall s > 0,
\end{equation}
so that the unknown function $\tilde{f}$ can be determined by
requiring the singular terms to cancel in the limit $\br \to
\brb$. That is, as $\br \to \brb$ we require
\begin{equation*}
  \lq(s) U(\br,\brb) = \lq(s) \paren{ R_0(\br,\brb) + \frac{1}{\kshat \abs{\br-\brb}}}
  \sim \tilde{f}(s) \lG(\br,\brb,s) \sim \frac{\tilde{f}(s)}{\kshat \abs{\br - \brb}}, 
\end{equation*}
implying $\tilde{f}(s) = \lq(s)$. We then find that $\lstreg$
satisfies
\begin{equation}
  \label{eq:lstregIntEq}
  \lstreg(\br, s) = \int_{\Omega}\tilde{G}(\br, \br', s) p_{0}(\br')d\br'
 - \frac{\epsilon  \khat \lstreg(\brb, s)}{1 + \epsilon \khat R_{0}(\brb, \brb)} 
 \brac{ \lG(\br,\brb,s) - U(\br,\brb) }.
\end{equation}
For each $s > 0$ this corresponds to the Laplace transform of the
time-domain integral equation we derived in~\cite{IsaacsonNewby2013}
when $\khat = \kshat$, 
\begin{multline}
  \label{eq:10}
    \streg(\br, t) = \int_{\Omega}G(\br, \br', t)p_{0}(\br')d\br'  \\
+ \frac{\epsilon \khat}{1 + \epsilon\khat R_{0}(\brb, \brb)}\left(U(\br, \brb)\streg(\brb, t) 
  -\int_{0}^{t} G(\br, \brb, t - s)\streg(\brb, s)ds\right),
\end{multline}
see equation (2.33) in~\cite{IsaacsonNewby2013}.
Recalling~\eqref{eq:4}, we conclude that the perturbation expansion as
$\epsilon \to 0$ of this integral equation developed
in~\cite{IsaacsonNewby2013} determines the corresponding expansion of
$\st(\br,t)$ by simply replacing $\kshat$ with $\khat$. For
completeness we now summarize that expansion.
\begin{PR} \label{thm:shortTimeDensityExpansion} The asymptotic
  expansion of $\st(\br,t)$ as $\epsilon \to 0$ is given by
  \begin{subequations} 
    \begin{align} 
      \axrho{0}(\br,t) &= G(\br,\bro,t) - \frac{1}{\abs{\Omega}}, \notag \\
      \axrho{1}(\br,t) &= \begin{aligned}[t]
        -\khat\int_{0}^{t}G(\br,\brb, t-s)&\axrho{0}(\brb,s) \, ds  
        +\frac{\khat}{\abs{\Omega}} U(\bro,\brb) \notag \\
        &+ \frac{\khat}{\abs{\Omega}} \int_\Omega G(\br,\br',t) U(\br',\brb)\, d \br',
      \end{aligned}
    \end{align}  
  \end{subequations}
  with $\axrho{2}(\br,t)$ given by (2.39c)  of~\cite{IsaacsonNewby2013}.
\end{PR}

\begin{remark}  
  In the limit $\br\to\brb$ \eqref{eq:lstregIntEq} becomes
  \begin{equation*}
    \lstreg(\brb, s)  = \int_{\Omega}\tilde{G}(\brb, \br', s) p_{0}(\br')d\br' - \frac{\epsilon\khat \lstreg(\brb, s)}{1 + \epsilon\khat R_{0}(\brb, \brb)}\left(\lR(\brb, \brb, s) - R_{0}(\brb, \brb)  - \sqrt{\frac{s}{D\khat^{2}}}\right).
  \end{equation*}
  Solving for $\lstreg(\brb, s)$ yields
  \begin{equation}
    \label{eq:8}
    \lstreg(\brb, s) = \frac{1 + \epsilon \khat R_{0}(\brb, \brb)}{1 + \epsilon \khat \lR(\brb, \brb, s) - \epsilon\sqrt{\frac{s}{D}}}\int_{\Omega}\tilde{G}(\brb, \br', s) p_{0}(\br')d\br'.
  \end{equation}
  Substituting~\eqref{eq:8} into~\eqref{eq:lstregIntEq} we obtain an
  explicit formula for $\lstreg(\br,s)$, and hence
  $\lst(\br,s)$. However, inverting~\eqref{eq:8} seems less practical
  then developing regular perturbation expansions of~\eqref{eq:10} as
  in~\cite{IsaacsonNewby2013}.
\end{remark}

\subsection{Summary of complete expansions}
Combining~\eqref{eq:ltDensityExpan} with
Result~\ref{thm:shortTimeDensityExpansion}, we obtain a complete
expansion of $p(\br,t)$ through terms of $O(\epsilon^2)$ for each
reactive model.  The expansion is identical to that obtained
in~\cite{IsaacsonNewby2013} for the pure absorption Smoluchowski
reaction where $\khat = \kshat$.  To obtain expansions for either the
partial absorption or Doi reaction mechanisms, one need only
substitute the appropriate diffusion limited rate $\krhat$ or
$\kdhat$, given by~\eqref{eq:robDiffLimRate}
and~\eqref{eq:doiDiffLimRate}.  In the remainder we only utilize the
expansion through terms of $O(\epsilon)$.  We therefore omit higher
order terms here, and direct the interested reader to
\cite{IsaacsonNewby2013} for the complete expansion formulas through
$O(\epsilon^2)$.
\begin{PR} \label{thm:diffDensExpansion} The asymptotic expansion
  of $p(\br,t)$ as $\epsilon \to 0$ through terms of $O(\epsilon)$ is
  \begin{equation} 
    \begin{split}
      \label{eq:diffDensExpansion}
      p(\br,t) &\sim  G(\br,\bro,t) - \frac{1}{\abs{\Omega}} \paren{ 1 - e^{-\lambdalt D t}}
      - \epsilon \khat \int_0^t G(\br,\brb,t-s) \left(G(\brb, \bro, s) - \frac{1}{\abs{\Omega}}  \right)   ds \\
      &\quad + \frac{\epsilon \khat}{\abs{\Omega}}\left[ \int_\Omega G(\br,\br',t) U(\br',\brb) d\br' 
        - U(\br,\brb)e^{-\lambdalt D t} +  (1 - e^{-\lambdalt D t}) U(\bro, \brb)\right] .
    \end{split}
  \end{equation}
\end{PR}

In many biological problems, statistics of the first passage time for
a reaction to occur are of particular interest. Denote by $f(t)$ the
corresponding probability density function for the first passage
time. It is related to $p(\br,t)$ by
\begin{equation}
  \label{eq:1}
  f(t) = -\frac{d}{dt}\int_{\Omega}p(\br, t)d\br.
\end{equation}
Using Result~\ref{thm:diffDensExpansion}, the expansion of $f(t)$ can
be derived as in~\cite{IsaacsonNewby2013}. We find
\begin{PR}
  The asymptotic expansion of $f(t)$ as $\epsilon \to 0$ through terms
  of $O(\epsilon)$ is
\begin{equation}   \label{eq:13}
  f(t) \sim \left(1 - \epsilon\khat U(\bro, \brb) \right)\lambdalt D e^{-\lambdalt D t}   
  + \epsilon \khat \left(G(\brb, \bro, t) - \frac{1}{\abs{\Omega}}  \right),
\end{equation}
where $\lambdalt$ is the asymptotic expansion of the principal
eigenvalue given by \eqref{eq:evalExpan}.
\end{PR}
With substitution of the appropriate $\khat$ for $\kshat$, the
complete expansion through terms of $O(\epsilon^2)$ is given by (2.49)
of~\cite{IsaacsonNewby2013}.

\begin{remark}
  Suppose $\gamhat$ or $\muhat$ is chosen so that the diffusion
  limited rates in the partial absorption and Doi models are the same,
  $\krhat = \kdhat$. The final expansion formulas for $p(\br,t)$ and
  $f(t)$ are then identical. For irreversible bimolecular reactions,
  this demonstrates that when $\epsilon$ is a small parameter the two
  models are practically equivalent. 
\end{remark}

\subsection{Short time correctness of~\eqref{eq:diffDensExpansion}} \label{S:shorttimeproof}
In deriving the asymptotic expansion of the short time correction
given in~\eqref{thm:shortTimeDensityExpansion} we made the
approximation that the large $s$ contribution in~\eqref{eq:20} could
be ignored (i.e. when $s = O(\epsilon^{-\beta})$ for $\beta > 0$).
This might lead to the suspicion that the expansion of $p(\br,t)$
given by~\eqref{eq:diffDensExpansion} may be incorrect for
sufficiently small times (i.e.  $t = O(\epsilon^{\beta})$ for
$\beta > 0$). In this section we demonstrate that on such short
timescales, the error between $p(\br,t)$ and the
expansion~\eqref{eq:diffDensExpansion} is at most
$O(\epsilon^{2+\beta})$.

For simplicity, we will restrict attention to the error introduced in
the asymptotic expansion of the solution to the Doi problem,
$p(\br,t)$ satisfying~\eqref{eq:smolPDEModel} with the reactive
term~\eqref{eq:doiTerm} and the initial condition,
$p(\br,0) = \delta(\br-\bro)$.  To make explicit the $\epsilon$
dependence of $p(\br,t)$, we will subsequently write
$\peps(\br,t)$. With this notation, $p_{0}(\br,t) = G(\br,\bro,t)$
will then denote the solution to the corresponding $\epsilon=0$
pure-diffusion problem in which there is no reactive target. Finally,
in what follows we denote by $\rhoeps(\br,t)$ the corresponding
truncated first order expansion given by the right hand side
of~\eqref{eq:diffDensExpansion}.

To examine the short-time behavior of $\rhoeps(\br,t)$, we will find
it convenient to rewrite~\eqref{eq:diffDensExpansion}.
Using~\eqref{eq:UasIntOfG}, Fubini's Theorem and the semigroup
property of $G(\br,\br',t)$ we have the identity that
\begin{align*}
  \int_{\Omega} G(\br,\br',t) U(\br',\brb) \, d\br'
  &= \int_0^{\infty} \int_{\Omega} G(\br,\br',t) \brac{G(\br',\brb,s) - \frac{1}{\abs{\Omega}}} \, d\br' \, ds, \\
  &= \int_t^{\infty} \brac{G(\br,\brb,s) - \frac{1}{\abs{\Omega}}} \, ds, \\
&= U(\br,\brb) - \int_0^t \brac{G(\br,\brb,s) - \frac{1}{\abs{\Omega}}} \, ds, \\
&= U(\br,\brb) - \int_0^t G(\br,\brb,s) \, ds + \frac{t}{\abs{\Omega}}. 
\end{align*}
Using this identity we can simplify~\eqref{eq:diffDensExpansion} to
\begin{multline}
  \label{eq:rhoepsDef}
  \rhoeps(\br,t) = G(\br,\bro,t) + \frac{1}{\abs{\Omega}} \paren{ 1 - e^{-\lambdalt D t}} 
  \brac{-1 + \epsilon \khat \paren{U(\br,\brb) + U(\bro,\brb)}} \\ 
  + \frac{\epsilon \khat t}{\abs{\Omega}^2} 
  - \epsilon \khat \int_0^t G(\br,\brb,t-s) G(\brb, \bro, s) ds.
\end{multline}

Let
\begin{equation*}
  d(\br,\Omega_{\epsilon}) = \inf_{\br' \in \Omega_{\epsilon}} \abs{\br - \br'} 
\end{equation*}  
denote the distance of $\br$ from the target, $\Omega_{\epsilon}$.
Our goal is to demonstrate the following:
\begin{theorem} \label{thm:shortTimeErr}
  For all $\epsilon > 0$ sufficiently small, when
  $t = O(\epsilon^{\beta})$ with $\beta > 0$, then
  \begin{equation}
    \label{eq:shortTimeErroEq}
    \peps(\br,t) = \rhoeps(\br,t) + O(\epsilon^{2 + \beta}),
  \end{equation}
  for all $\br$ and $\bro$ in $\Omega$ such that
  \begin{equation} \label{eq:minDistFromTarg}
    \min \{d(\br,\Omega_{\epsilon}),d(\bro,\Omega_{\epsilon})\} > C > 0,
  \end{equation}
  for any positive constant $C$ independent of $\epsilon$. That is,
  provided we start the particle $O(1)$ from the target and examine
  the solution $O(1)$ from the target, for short times we expect the
  error between the asymptotic expansion and the true solution to the
  Doi problem to be higher then second order in $\epsilon$.
\end{theorem}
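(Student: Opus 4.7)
The plan is to control the error via the triangle inequality
$\abs{\peps(\br,t) - \rhoeps(\br,t)} \leq \abs{\peps(\br,t) - G(\br,\bro,t)} + \abs{G(\br,\bro,t) - \rhoeps(\br,t)}$
and bound each piece by $O(\epsilon^{2+\beta})$ on the short timescale
$t = O(\epsilon^{\beta})$. The guiding intuition is that a Brownian walker started at
$\bro$, which sits an $O(1)$ distance from $\Omega_{\epsilon}$, has only a
super-exponentially small probability of having reached the target by a time of order
$\epsilon^{\beta}$. Thus the Doi density $\peps$ is exponentially close to the
pure-diffusion kernel $G(\br,\bro,t)$, while $\rhoeps$, written in the form
\eqref{eq:rhoepsDef}, also collapses to $G(\br,\bro,t)$ up to $O(\epsilon^{2+\beta})$
thanks to an explicit cancellation between its $O(\epsilon^{1+\beta})$ contributions.

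For $\peps - G$ I would invoke Duhamel's principle for the Doi equation,
\begin{equation*}
  \peps(\br,t) = G(\br,\bro,t)
  - \frac{D\muhat}{\epsilon^{2}}\int_{0}^{t}\!\int_{\Omega_{\epsilon}}
  G(\br,\br',t-s)\,\peps(\br',s)\,d\br'\,ds,
\end{equation*}
together with the Feynman--Kac (or comparison) bound $\peps(\br',s)\leq G(\br',\bro,s)$
which holds because the Doi reaction is a nonnegative sink. Since
$\min\{d(\br,\Omega_{\epsilon}),d(\bro,\Omega_{\epsilon})\}\geq C$, for $\br'\in\Omega_{\epsilon}$
and $\epsilon$ sufficiently small we have $\abs{\br-\br'},\abs{\br'-\bro}\geq C/2$. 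Standard
short-time heat-kernel estimates then give both $G(\br,\br',\tau)$ and $G(\br',\bro,\tau)$
bounded by $Ke^{-c/\tau}$ for $\tau\leq t$: the singular Gaussian part is super-exponentially
small, and a method-of-images/parametrix expansion shows the regular part $R(\br,\br',\tau)$
is likewise super-exponentially small as $\tau\to 0$ for interior points. Inserting these
bounds and using $\abs{\Omega_{\epsilon}} = O(\epsilon^{3})$ together with the
$\epsilon^{-2}$ prefactor yields an overall error of
$O(\epsilon^{1+\beta}e^{-c/\epsilon^{\beta}})$, smaller than any polynomial in $\epsilon$.

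For $\rhoeps - G$ I would work from \eqref{eq:rhoepsDef}. Using \eqref{eq:evalExpan},
$\lambdalt Dt = \khat\epsilon t/\abs{\Omega} + O(\epsilon^{2+\beta})$ on the regime
$t=O(\epsilon^{\beta})$, whence $1-e^{-\lambdalt Dt} = \khat\epsilon t/\abs{\Omega} + O(\epsilon^{2+\beta})$
since $(\lambdalt Dt)^{2}=O(\epsilon^{2+2\beta})$ is already absorbed. Multiplying by
$-1+\epsilon\khat(U(\br,\brb)+U(\bro,\brb)) = -1+O(\epsilon)$ and dividing by
$\abs{\Omega}$ leaves $-\khat\epsilon t/\abs{\Omega}^{2}+O(\epsilon^{2+\beta})$, which
cancels exactly the explicit $+\epsilon\khat t/\abs{\Omega}^{2}$ term in \eqref{eq:rhoepsDef}.
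The remaining convolution $\epsilon\khat\int_{0}^{t}G(\br,\brb,t-s)G(\brb,\bro,s)\,ds$ is
again super-exponentially small by the kernel bound above, since $\br$, $\brb$, $\bro$
are pairwise $O(1)$ apart. Thus $\rhoeps(\br,t)-G(\br,\bro,t)=O(\epsilon^{2+\beta})$,
and combining with the previous paragraph gives \eqref{eq:shortTimeErroEq}.

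The main obstacle is establishing the short-time super-exponential bound on the full
Neumann heat kernel $G(\br,\br',t)$ for interior points separated by an $O(1)$
distance. The singular Gaussian contribution is elementary, but one must also verify
that the regular part $R(\br,\br',t)$ decays super-exponentially as $t\to 0$ in the
interior. This is standard for smooth bounded domains via parametrix or method-of-images
constructions, but it is the principal analytic ingredient underpinning both the Duhamel
estimate for $\peps$ and the convolution estimate for the last term of $\rhoeps$.
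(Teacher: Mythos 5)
Your proposal is correct and follows essentially the same route as the paper: the same triangle inequality through $G(\br,\bro,t)$, Duhamel's principle together with the monotone comparison $\peps \le G$ for the $\peps - G$ piece, and the expansion of $1-e^{-\lambdalt D t}$ with the same cancellation against the explicit $\epsilon \khat t/\abs{\Omega}^2$ term plus a Gaussian bound on the convolution for the $\rhoeps - G$ piece. The only (minor) difference is in how the key short-time heat-kernel estimate is obtained: the paper simply invokes the Gaussian upper bound $G(\br,\br',t)\le c_1 t^{-3/2}e^{-\abs{\br-\br'}^2/c_2 t}$ from Bass and Hsu (valid for Lipschitz domains), whereas you propose to split $G$ into its free-space singular part and regular part $R$ and establish super-exponential decay of $R$ via a parametrix argument, which is a workable but less direct way to reach the same bound.
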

In establishing this result we will make use of the following lemma,
which is proven in Appendix~\ref{ap:lemmaPf}.
\begin{lemma} \label{lemma:pepsProps}
$\peps(\br,t)$ and $G(\br,\br',t)$ have the following
basic properties:
\begin{enumerate}
\item The Doi solutions monotonically increase as $\epsilon$
  decreases. That is, assuming
  $p_{\epsilon_1}(\br,0) = p_{\epsilon_2}(\br,0) = p_0(\br,0)$,
  \begin{equation} \label{eq:doiSolutMonotone}
    p_{\epsilon_2}(\br,t) \leq p_{\epsilon_1}(\br,t) \leq p_{0}(\br,t), \quad 0 \leq \epsilon_1 \leq \epsilon_2.  
  \end{equation}
\item Assume the domain $\Omega$ is sufficiently regular (at least
  Lipschitz, see~\cite{BassHsu1991}). For $t \ in \brac{0,1}$ there
  are constants, $c_1 > 0$ and $c_2 > 0$, such that the $\epsilon=0$
  diffusion equation Green's function, $G(\br,\br',t)$, satisfies the
  bound
  \begin{equation} \label{eq:gaussianBnd}
    G(\br,\br',t) \leq \frac{c_1}{t^{3/2}} e^{-\abs{\br-\br'}^2/c_2 t}.
  \end{equation}
\end{enumerate}
\end{lemma}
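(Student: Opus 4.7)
My approach splits the lemma along its two independent parts.

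For part~(2), the Gaussian upper bound on the Neumann heat kernel $G(\br,\br',t)$ on a bounded Lipschitz domain is a classical estimate. I would simply invoke the two-sided estimate of Bass and Hsu~\cite{BassHsu1991} for reflecting Brownian motion on such domains and retain only the upper half, restricted to $t\in[0,1]$; no new argument is required.

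For part~(1), the outer inequality $\peps(\br,t) \leq p_0(\br,t)$ admits a direct parabolic comparison proof. Letting $w = p_0 - \peps$, subtracting the two evolution equations gives
\[
  \partial_t w - D\lap w = \frac{D\muhat}{\epsilon^2}\ind_{\ombind}(\br)\,\peps \ge 0, \quad \br\in\Omega,\, t>0,
\]
with $w(\br,0)=0$ and homogeneous Neumann data on $\partial\Omega$. Since $\peps\ge 0$ by a first application of the maximum principle to the Doi equation itself (with non-negative initial data), the source is non-negative, and the parabolic maximum principle yields $w\ge 0$.

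The inner inequality $p_{\epsilon_2}\le p_{\epsilon_1}$ for $\epsilon_1 \le \epsilon_2$ is the main obstacle, because the Doi scaling $\mu = D\muhat/\epsilon^2$ makes the reactive coefficient $\mu\ind_{\ombind}$ non-monotone in $\epsilon$: shrinking the target simultaneously contracts its support and inflates the pointwise rate. Neither a pointwise Feynman--Kac bound nor a quadratic-form comparison of the generators succeeds directly, and one checks that no obvious intermediate Doi problem (varying either $\mu$ or $\ombind$ alone) interpolates the two cases in the right direction. My plan is to use the Feynman--Kac representation
\[
  \peps(\br,t) = E_{\bro}\!\left[\delta(\br-B_t)\exp\!\left(-\frac{D\muhat}{\epsilon^2}\int_0^t \ind_{\ombind}(B_s)\,ds\right)\right],
\]
where $B$ is reflecting Brownian motion on $\Omega$, and to decompose the exponent along the excursions of $B$ into $\ombind$. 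The crucial structural feature of the Doi scaling is that after rescaling $\by=(\br-\brb)/\epsilon$ and $s\mapsto s/\epsilon^2$ within each excursion, one obtains an $\epsilon$-independent exit problem from the unit ball with killing rate $D\muhat$; hence the conditional killing contribution per completed excursion (given the entry point) is independent of $\epsilon$. Because $\Omega_{\epsilon_1}\subset\Omega_{\epsilon_2}$, a single reflecting Brownian path visits the smaller target no more often than the larger one, and a coupling on the layered hitting times of $\partial\Omega_{\epsilon_1}$ and $\partial\Omega_{\epsilon_2}$ makes the total exponent for $\epsilon_1$ stochastically dominated by that for $\epsilon_2$. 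The residual subtlety is the partial excursion in progress at time $t$; this contributes at the order of a single completed excursion and does not reverse the dominance. I expect the paper's appendix to execute this excursion-coupling argument in detail.
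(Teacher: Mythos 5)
Your part~(2) is exactly the paper's route: the bound~\eqref{eq:gaussianBnd} is quoted directly from Theorem~2.3 (and Remark~3.11) of~\cite{BassHsu1991}, so nothing more is needed there. Your proof of the outer inequality $\peps \le p_0$ is also sound and is essentially the paper's argument in the case $\epsilon_1=0$: the paper writes the difference via Duhamel's principle against a non-negative Green's function, which is the integral form of your maximum-principle comparison.

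The genuine gap is your treatment of the inner inequality $p_{\epsilon_2}\le p_{\epsilon_1}$. What you offer there is a sketch, not a proof: the pivotal claim --- that under your excursion coupling the accumulated killing functional for the smaller target is stochastically dominated by that for the larger one --- is asserted, not established, and the observation that the per-excursion killing is $\epsilon$-independent after rescaling does not by itself produce a pointwise comparison of the two transition densities (the excursion counts, their entry points, and the partial excursion at time $t$ all differ between the two targets). Your closing deferral to ``the paper's appendix'' is misplaced, because the paper does nothing of the kind. Its proof is elementary: set $v=p_{\epsilon_1}-p_{\epsilon_2}$ and write
\begin{equation*}
  \PD{v}{t} = D\lap v - \mu\,\ind_{\Omega_{\epsilon_1}}(\br)\,v
  + \mu\brac{\ind_{\Omega_{\epsilon_2}}(\br)-\ind_{\Omega_{\epsilon_1}}(\br)}p_{\epsilon_2}(\br,t),
\end{equation*}
observe that the source is non-negative because $\Omega_{\epsilon_1}\subset\Omega_{\epsilon_2}$ and $p_{\epsilon_2}\ge 0$, and conclude $v\ge 0$ by Duhamel with the non-negative Green's function $G_{\epsilon_1}$ of the $\epsilon_1$-Doi operator. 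Note that this uses a \emph{single} $\mu$ in both equations, i.e.\ it proves monotonicity in the target radius at fixed absorption rate. Your instinct about the obstruction is correct as far as it goes: if one insists on $\mu_i=D\muhat/\epsilon_i^2$ in each equation, the source term above changes sign on $\Omega_{\epsilon_1}$ (where $\mu_2<\mu_1$) and this argument breaks. But the only instance of the lemma ever invoked in Theorem~\ref{thm:shortTimeErr} is $\epsilon_1=0$, where $\ind_{\Omega_{\epsilon_1}}\equiv 0$ and the issue evaporates --- precisely the case your maximum-principle argument already handles. So the difficulty you built the excursion machinery to address is sidestepped, not solved, by the paper, and your proposed machinery does not actually close it.
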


With the preceding lemma we are ready to establish our main result
\begin{proof}[Proof of Theorem~\ref{thm:shortTimeErr}]
  As
  \begin{equation*}
    \abs{\peps(\br,t) - \rhoeps(\br,t)} \leq  \abs{G(\br,\bro,t) - \rhoeps(\br,t)} + \abs{G(\br,\bro,t) - \peps(\br,t)},
  \end{equation*}
  it is sufficient to show that both
  \begin{equation} \label{eq:rhoErrEq}
    G(\br,\bro,t) - \rhoeps(\br,t) = O(\epsilon^{2 + \beta}),
  \end{equation}
  and
  \begin{equation} \label{eq:pepsErrEq}
    G(\br,\bro,t) - \peps(\br,t) = O(\epsilon^{2 + \beta}).
  \end{equation}  
  We begin with~\eqref{eq:rhoErrEq}. Recalling the definition of
  $\lambdalt$, \eqref{eq:evalExpan}, for $t = O(\epsilon^{\beta})$ we
  have that
  \begin{equation*}
    1 - e^{-\lambdalt D t} = \frac{\khat \epsilon t}{\abs{\Omega}} + O(\epsilon^{2 + \beta}),
  \end{equation*}
  so that 
  \begin{multline*}
    G(\br,\bro,t) - \rhoeps(\br,t) = -\epsilon \khat \brac{ \frac{\epsilon \khat t}{\abs{\Omega}^2} + O(\epsilon^{2+\beta})}  
    \brac{U(\br,\brb) + U(\bro,\brb)} + O(\epsilon^{2 + \beta}) \\
    + \epsilon \khat \int_0^t G(\br,\brb,t-s) G(\brb, \bro, s) ds.
  \end{multline*}
  The assumption that $d(\br,\Omega_{\epsilon}) > C$ and
  $d(\bro,\Omega_{\epsilon})>C$, for some $C > 0$ independent of
  $\epsilon$, implies that $\abs{\br - \brb} > C$ and
  $\abs{\bro - \brb} > C$. By~\eqref{eq:UExpSingPart}, this assumption
  implies that $U(\br,\brb)$ and $U(\bro,\brb)$ are both uniformly
  bounded by constants independent of $\epsilon$. As such, the right
  hand side of the first line is $O(\epsilon^{2+\beta})$. We claim
  that the integral on the second line is exponentially small in
  $\epsilon$, which immediately implies~\eqref{eq:rhoErrEq}.

  By~\eqref{eq:gaussianBnd}
  \begin{align*}
    \int_0^t G(\br,\brb,t-s) G(\brb,\bro,s) \, ds &\leq c_1^2 
    \int_0^t \frac{1}{(t-s)^{3/2} s^{3/2}} e^{-\abs{\br-\brb}^2 /c_2 (t-s)} e^{-\abs{\brb-\bro}^2 /c_2 s}\, ds \\
    &= \frac{ c_1^2 \sqrt{c_2 \pi} \paren{\abs{\br - \brb} + \abs{\brb - \bro}}}{\abs{\br-\brb}\abs{\brb-\bro}}
      \frac{1}{t^{3/2}} e^{-\paren{\abs{\br-\brb} + \abs{\brb-\bro}}^2/c_2 t}.
  \end{align*}
  Here the last line follows by Laplace transforming the time
  convolution, combining terms, and then inverse transforming.
  Using~\eqref{eq:minDistFromTarg} and the boundedness of $\Omega$, we
  find
  \begin{align*}
    \int_0^t G(\br,\brb,t-s) G(\brb,\bro,s) \, ds \leq \frac{M}{t^{3/2}}
    e^{-2 C^2 / c_2 t},
  \end{align*}
  for a positive constant, $M$, independent of $t$ and $\epsilon$.
  The inequality demonstrates that for all $\epsilon$ sufficiently
  small, the integral is exponentially small in $\epsilon$ when
  $t = O(\epsilon^{\beta})$, establishing~\eqref{eq:rhoErrEq}.


  
  Finally we establish~\eqref{eq:pepsErrEq}. Let
  \begin{equation*}
    v(\br,t) = G(\br,\bro,t) - \peps(\br,t),
  \end{equation*}
  which satisfies the diffusion equation
  \begin{equation*}
    \begin{aligned}
      \PD{v}{t} &= D \lap V + \mu \ind_{\Omega_{\epsilon}}(\br) \peps(\br,t), \quad \br \in \Omega, t > 0\\
      \partial_{\veta} v(\br,t) &= 0, \quad \br \in \partial \Omega, t > 0,\\
      v(\br,0) &= 0, \quad \br \in \Omega,
    \end{aligned}
  \end{equation*}
  where $\ind_{\Omega_{\epsilon}}(\br)$ denotes the indicator function
  of the set $\Omega_{\epsilon}$. Using Duhamel's Principle we have
  \begin{equation*}
    v(\br,t) = \mu \int_0^t \int_{\Omega_{\epsilon}} G(\br,\br',t-s) \peps(\br',s) \, d\br' \, ds.
  \end{equation*}
  Clearly $v(\br,t) \geq 0$. Recalling that
  $p_0(\br,t) = G(\br,\bro,t)$, the monotonicity
  condition~\eqref{eq:doiSolutMonotone} then implies
  \begin{align*}
    \abs{v(\br,t)} \leq \mu \int_0^t \int_{\Omega_{\epsilon}} G(\br,\br',t-s) G(\br',\bro,s) \, d\br' \, ds.
  \end{align*}
  We now apply the bound~\eqref{eq:gaussianBnd} to each Green's
  function within the integrand, and use that
  $d(\br,\Omega_{\epsilon}) > C$ and $d(\bro,\Omega_{\epsilon}) > C$,
  for some $C > 0$ independent of $\epsilon$. Assuming that
  $\epsilon$, and hence $t$, are sufficiently small we find 
  \begin{align*}
    G(\br,\br',t-s) &\leq \frac{c_1}{(t-s)^{3/2}} e^{-C^2/c_2 (t-s)} 
    \leq \frac{c_1}{t^{3/2}} e^{-C^2/c_2 t}, &&\text{for all } s \in \brac{0,t},\\
    G(\br',\br_0,s) &\leq \frac{c_1}{s^{3/2}} e^{-C^2/c_2 s}  
    \leq \frac{c_1}{t^{3/2}} e^{-C^2/c_2 t}, &&\text{for all } s \in \brac{0,t}.
  \end{align*}
  Here we have used that for $t$ sufficiently small, the Gaussian
  bounds are maximized in time at $s=0$ and $s=t$ respectively. Using
  these bounds we see that
  \begin{equation*}
    \abs{v(\br,t)} \leq \mu \abs{\Omega_{\epsilon}} t \paren{\frac{c_1}{t^{3/2}} e^{-C^2/c_2 t}}^2,
  \end{equation*}
  which is exponentially small in $\epsilon$ for
  $t = O(\epsilon^{\beta})$. We therefore find
  that~\eqref{eq:pepsErrEq} holds.
\end{proof}

\begin{remark}
  One could modify the proof to allow for general bounded initial
  conditions. While we do not show it here, we expect these
  modifications would not require the initial condition to be zero
  near $\Omega_{\epsilon}$, as was necessary for the delta source
  initial conditions we studied above. Instead, we expect that only
  the condition $d(\br,\Omega_{\epsilon}) > C > 0$ should carry
  over.
\end{remark}
 
\section{Results: spherical domain}
\label{sec:numericalResults}
\subsection{Comparison of accuracy in the spherically symmetric problem}
We illustrate the approximation for a spherical domain, with standard
spherical coordinates $(r, \theta, \varphi)$, and the
  assumption that $D=1$. Choosing $\Omega$ to be the unit sphere, we
  have that
\begin{equation}
  \label{eq:27}
  G(\br, \br', t) = \frac{1}{\abs{\Omega}} + \frac{1}{\pi^{2}}
\sum_{n=1}^{\infty}\sum_{m=1}^{\infty} \frac{(2n + 1)\alpha_{m}^{3}}{1/4 + \alpha_{m}^{2} - (n+1/2)^{2}}
P_{n}(\cos\theta)j_{n}(\alpha_{m}r)j_{n}(\alpha_{m}r')e^{-\alpha_{m}^{2}t},
\end{equation}
where $P_{n}(.)$ are the Legendre polynomials and $j_{n}(.)$ are
spherical Bessel functions~\cite{carslaw59a,HandbookOfPDEs}. The
Neumann Green's function is~\cite{HandbookOfPDEs}
\begin{multline}
  \label{eq:28}
  U(\br, \br') = \frac{1}{4\pi}\left[\frac{1}{\sqrt{r^{2} - 2rr'\cos\theta + {r'}^{2}}} + \frac{1}{\sqrt{1 -  2rr'\cos\theta + r^{2}{r'}^{2}}}\right] \\
 + \frac{1}{4\pi} \log(\frac{2}{1 - r r'\cos\theta + \sqrt{1 - 2rr'\cos\theta + r^{2}{r'}^{2}}})
+ \frac{r^{2} + {r'}^{2}}{6\abs{\Omega}} - \frac{7}{10\pi}.
\end{multline}
We first explore the simplest case where the solution is radially
symmetric. Here we can check the accuracy of our asymptotic expansions
by direct comparison to exact solutions of the Robin and Doi
problems. We use the explicit solution formulas
from~\cite{carslaw59a,HandbookOfPDEs} and~\cite{IsaacsonAgbanusi13}
respectively. The target is placed at the center of the sphere, and
the diffusing molecule's initial position is uniformly distributed
over the sphere of radius $r_{0}$.  In this special case
$G(\br,\br',t)$ can be simplified, see Ref.~\cite{IsaacsonNewby2013}.
  
To compare each of the reaction mechanisms, we choose $\gamhat$ (the
Robin constant) and $\muhat$ (the Doi absorption rate) so that the
diffusion limited rates $\krhat$~\eqref{eq:robDiffLimRate} and
$\kdhat$~\eqref{eq:doiDiffLimRate} are equal. With this choice the
asymptotic expansions of the solutions for each reaction model are
identical in the outer region (see the previous section).

\begin{figure}[tbp]
  \centering
  \subfigure[][The first passage time density ($\epsilon = 0.01$).]
{
  \includegraphics[width=10cm]{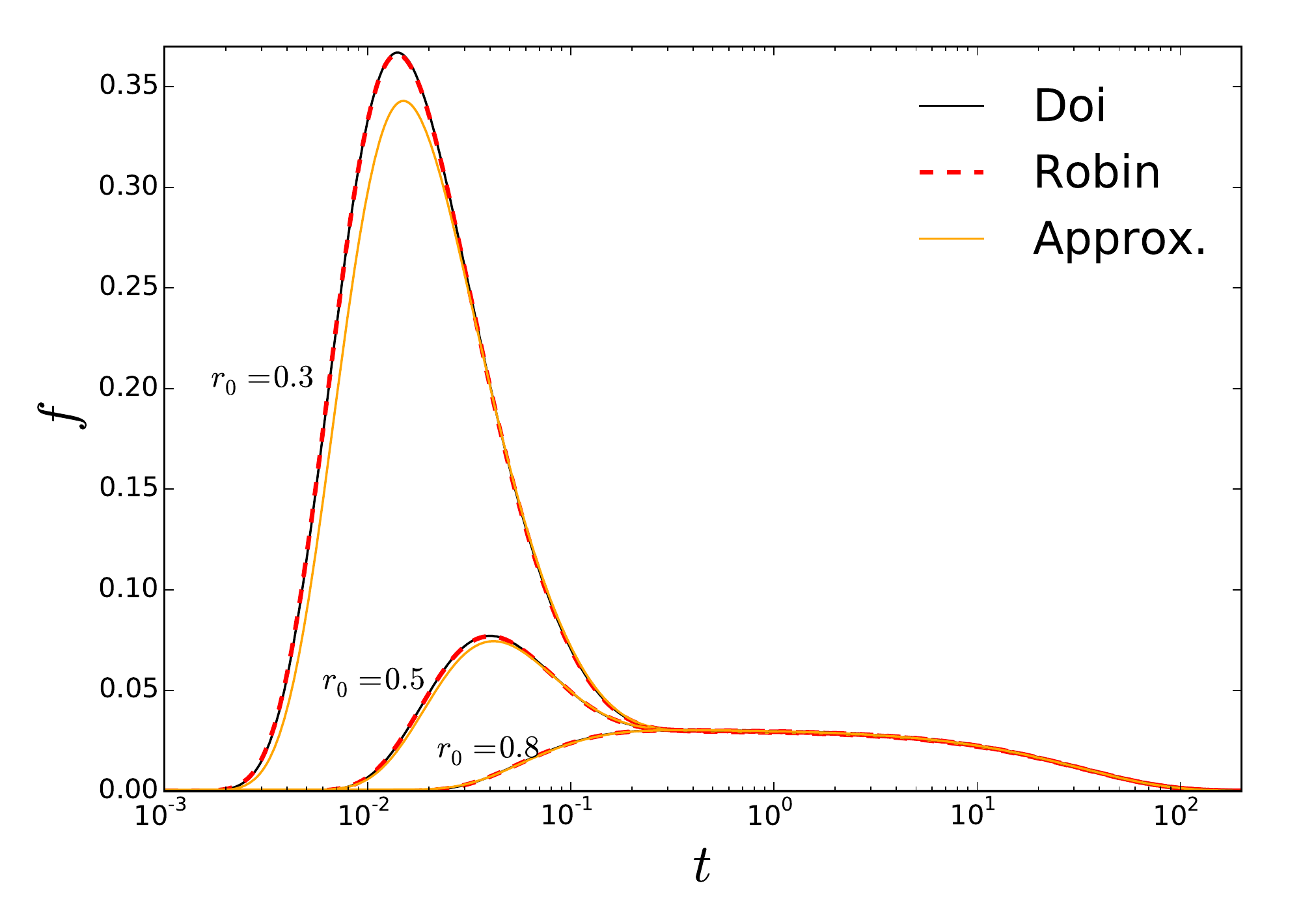}
}
  \subfigure[][The max norm difference between the first passage time densities ($r_{0}=0.3$).]{
  \includegraphics[width=14cm]{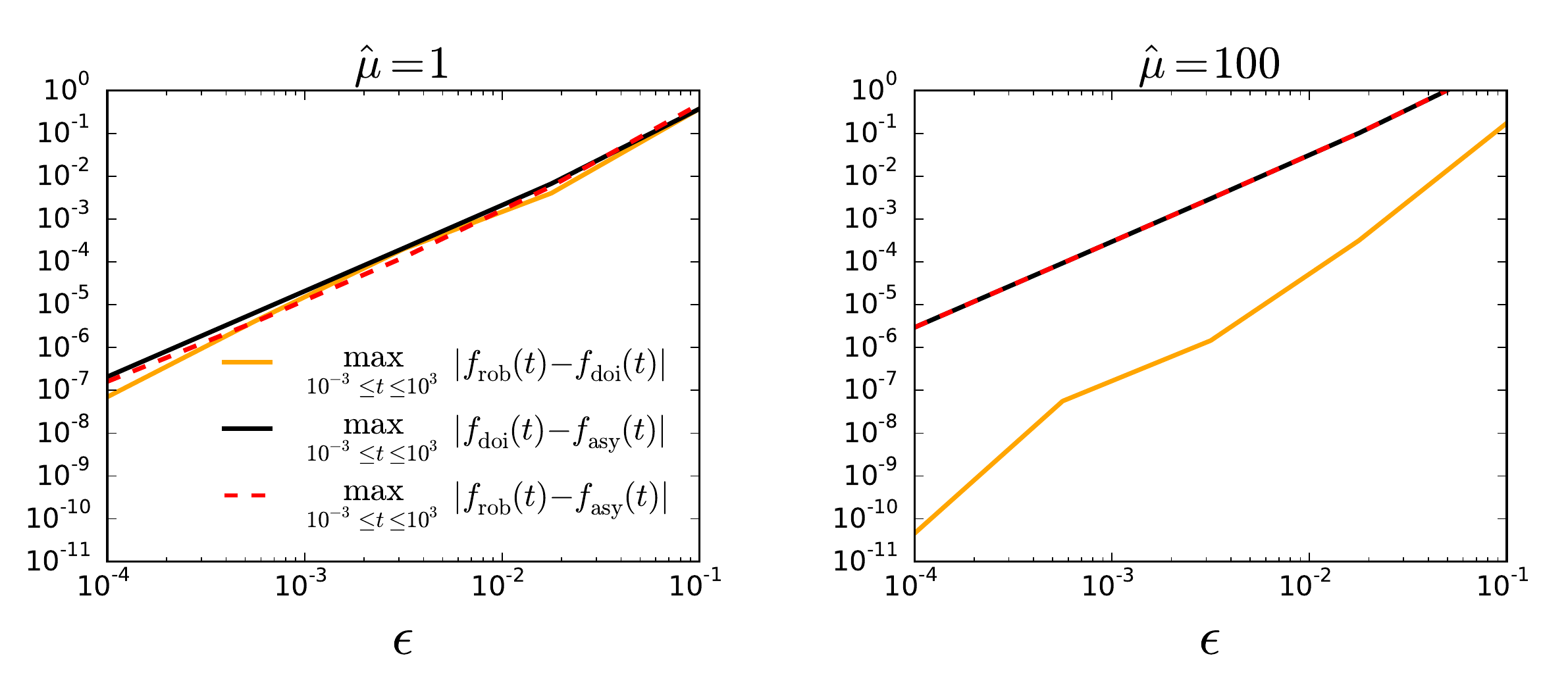}
}
\caption{Spherically symmetric first passage time density. We set $D = 1$ and $\gamhat = \sqrt{\muhat}/\tanh(\sqrt{\muhat}) - 1$. With these choices, $\krhat = \kdhat$, so that solutions to the Robin and Doi models have identical asymptotic expansions through $O(\epsilon^2)$.}
  \label{fig:sym2}
\end{figure}
In Fig.~\ref{fig:sym2} we examine the difference between three first
passage time densities: the exact Doi solution, $f_\text{doi}(t)$, the
exact Robin solution, $f_\text{rob}(t)$, and the two term asymptotic
approximation, $f_\text{asy}(t)$ (i.e. truncated after terms of
$O(\epsilon)$).  The max norm error between the two exact solutions
and the approximation illustrates the accuracy of the asymptotic
result, which converges like $O(\epsilon)$ as expected.  Since
$\gamhat$ and $\muhat$ are chosen so that the two reaction mechanisms
are comparable, we also show the max norm difference between the two
exact solutions.  When $\muhat=1$ the effective reaction rate is low
and the pairwise differences between the three solutions are similar
in magnitude.  When $\muhat=100$ the reaction rate is increased and
the exact solutions are closer to each other than they are to the
approximation. Moreover, the difference between the exact solutions
appears to approach zero faster as $\epsilon \to 0$ than the
difference between each exact solution and corresponding asymptotic
expansion.

\subsection{Comparison of accuracy in the non spherically symmetric problem}
\begin{figure}[tbp]
  \centering
  \includegraphics[width=8cm]{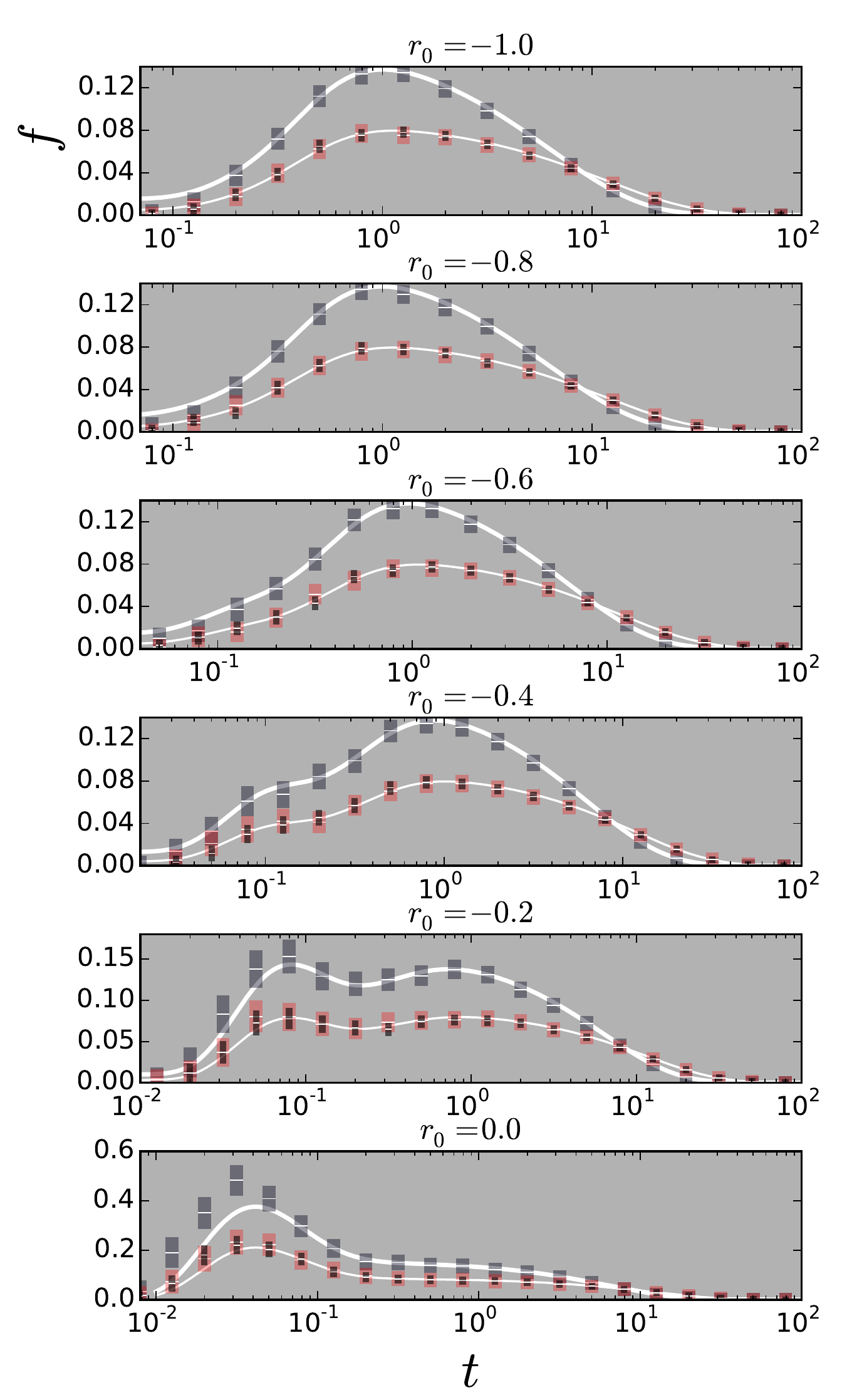}
  \caption{The first passage time density, with the target center
    fixed at $\brb=(\theta_{b}, \phi_{b}, r_{b}) = (0, 0, 0.5)$.  The
    initial position is $\bro = (0, 0, r_{0})$. The approximations
    (solid lines) are compared to normalized histograms obtained from
    Monte Carlo simulations, which are plotted as rectangles centered
    at the histogram value. The height of each rectangle represents a
    $95 \%$ confidence interval.  Thick white lines (gray rectangles)
    show the approximation of the pure absorption problem. Thin white
    lines correspond to the approximation of the Robin (red
    rectangles) and Doi (black rectangles) partial reaction
    mechanisms. Parameter values are $\epsilon = 0.05$, $D=1$,
    $\muhat = 5$, and
    $\gamhat = \sqrt{\muhat}/\tanh(\sqrt{\muhat}) - 1$. }
  \label{fig:asym2}
\end{figure}
\begin{figure}[tbhp]
  \centering
  \includegraphics[width=14cm]{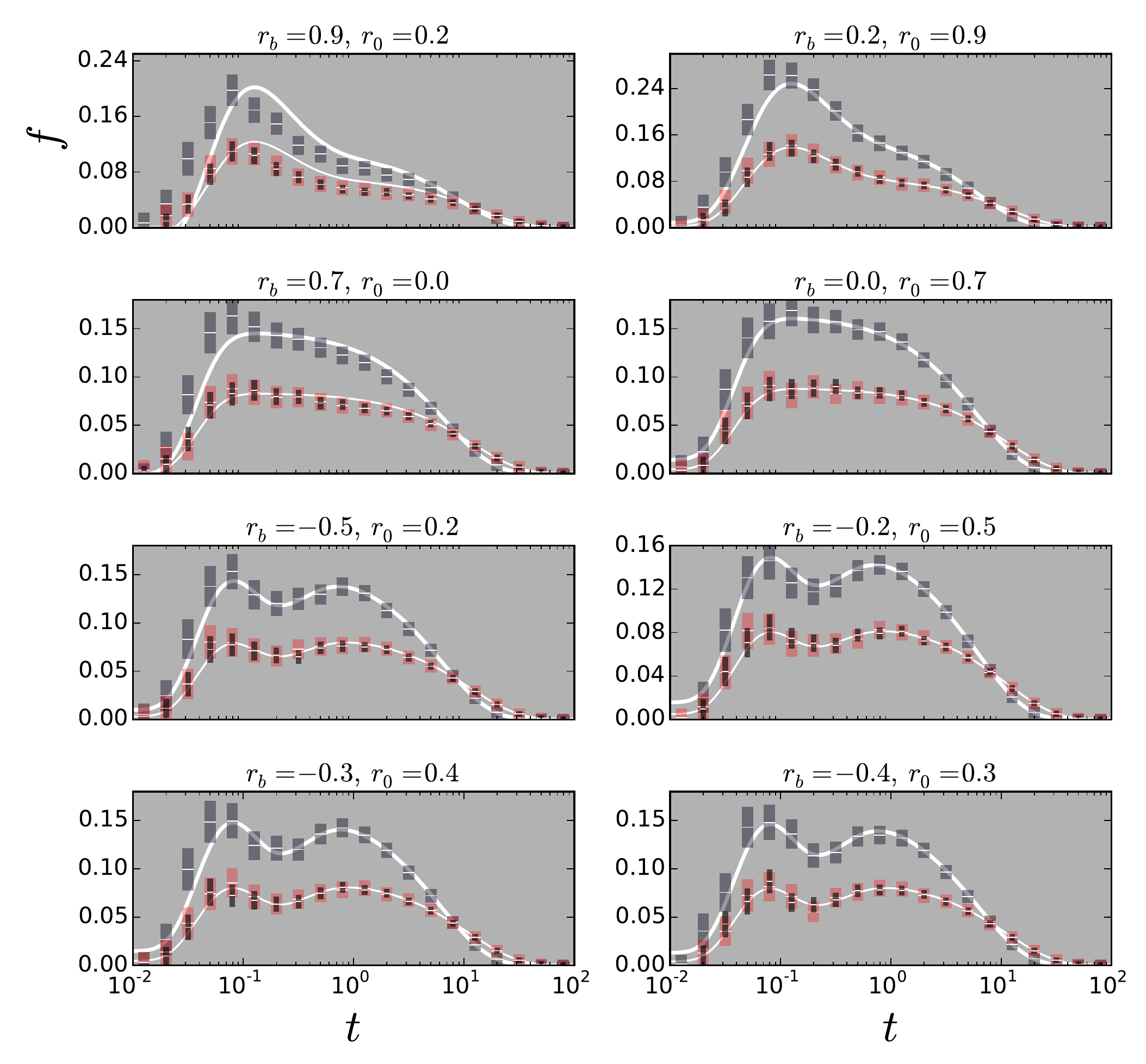}
  \caption{Same as Fig.~\ref{fig:asym2} with different choices of the target location $\brb$ and initial position $\bro$ such that $\abs{\brb - \bro} = 0.7$}
  \label{fig:asym1}
\end{figure}

In Fig.~\ref{fig:asym2} we break the radial symmetry by setting the target position to $\brb=(\theta_{b}, \phi_{b}, r_{b}) = (0, 0, 0.5)$ and examine the effect of changing the location of the initial position $\bro$ on the first passage time density.  Notice that in this example, the molecule starts at a specific point $\bro$, whereas in the previous section, the molecule started at a uniformly distributed point over the sphere of radius $r_{0}$.  Without radial symmetry, we no longer have exact solutions to validate the accuracy of the approximation so we use Monte Carlo simulations.  The implementation of the Monte Carlo simulations is described in Appendix~\ref{ap:MCsims}.  We see that the asymptotic approximations, which only include terms through $O(\epsilon)$, become somewhat less accurate when the initial position approaches within a distance $0.5$ of the target.

In Fig.~\ref{fig:asym1} we look at more choices of initial position and target location with $\abs{\brb - \bro} = 0.7$ fixed.  As expected, there is little to no loss of accuracy when $\bro$ approaches the boundary.  In contrast, we see that the approximation loses some accuracy when $\brb$ approaches the boundary.  To improve accuracy in this case, when the target is $O(\epsilon)$ from the boundary one could consider a more refined inner problem in which the target lives in a half-space above a plane (corresponding to the flattened boundary). If the target merges with the boundary, becoming a small hole, one would need to solve the related narrow escape problem~\cite{holcman04c,cheviakov10b}.

\subsection{Multiple timescales}
Our analysis reveals three important timescales for the first passage time problem.  The mean first passage time, $\tau_{m} \sim 1/\lambda$ (see Sec.~\ref{sec:unif-asympt-appr-long-time}), is the longest timescale and is independent of the initial condition; it is characterized by trajectories that explore a large fraction of the domain before reaching the target.  Two other timescales, corresponding to the peaks seen in Fig.~\ref{fig:asym2}, are revealed by the short time contribution of our asymptotic approximation.  We call these $\tau_{g}$ and $\tau_{s}$, where $\tau_{g}\leq \tau_{s} \leq \tau_{m}$.

When $\abs{\brb - \bro} < 0.7$, the first-passage time density has a single peak at $\tau_{g} < 0.1$.  For larger initial separations, $\abs{\brb - \bro} > 0.7$, there is again a single peak, but it occurs later, on the timescale of $\tau_s$, with its maximum closer to $t=1$. When $\abs{\brb - \bro} \approx 0.7$, both peaks are present and distinguishable from each other.

In Fig.~\ref{fig:asym1}, we examine different target positions with $\abs{\brb - \bro} = 0.7$ fixed.  When the center of the sphere is between the starting position and the target, we see two peaks in the first passage time density.  In contrast, when the target and the starting position are within the same hemisphere, there is a single broad peak encompassing both timescales.  Note that in the top left pane that the target is within $2\epsilon$ of the boundary, where our expansion is no longer valid.  When the target is on the boundary, this situation is known as a narrow escape problem \cite{schuss07a,cheviakov10b,pillay10a}.

To understand what gives rise to the $\tau_{g}$ and $\tau_{s}$ timescales, we examine the approximation to the joint distribution $p(\br, t)$, which tells us what parts of the domain the molecule is likely to explore during the search.  In Fig.~\ref{fig:asymSD} we set $\bro = (\theta_{0}, \phi_{0}, r_{0}) = (\pi, 0, 0.35)$ and $\brb = (0, 0, 0.35)$.  For reference, the resulting first passage time density is shown in Fig.~\ref{fig:asymSDa}, which has two peaks corresponding to $\tau_{g}$ and $\tau_{s}$.  In Fig.~\ref{fig:asymSDb}, we show $p(\br, t)$ on the bottom half of the spherical domain at several different times.  The top three snapshots show how probability arrives at the target as $t\to \tau_{g}$, and the bottom three snapshots show how probability arrives at the target as $t\to \tau_{s}$.

At $t=0.05$ we see the initial Gaussian spread that is relatively unaffected by the boundary.  Indicating that $\tau_{g}$ is characterized by trajectories that reach the target before encountering the boundary.

The bottom three snapshots, starting at $t=0.25$, show how probability spreads out as $t\to \tau_{s}$.  Recall that for a Brownian walker in three dimensions, there is an effective outward radial drift induced by the dimension of the walk.  As a result, at $t= 0.15$ we see most of the probability move to the boundary, away from the target.  Hence, $\tau_s$ is characterized by trajectories that initially travel away from the target and encounter the boundary, which directs the molecule around the domain to the target.

While this example is idealized, it illustrates how our asymptotic expansions can help understand the role of domain geometry in reaction-diffusion systems. One potential application is to cellular systems, where the shape of a cell or a membrane-enclosed region may influence the dynamics of cellular processes.  The topic of how confinement effects first passage time properties has been explored in several recent studies \cite{guerin16a,mattos12a,benichou10b,condamin07a}.

\begin{figure}[tbp]
  \centering
  \subfigure[][ The first passage time density.]
  {
    \includegraphics[width=8cm]{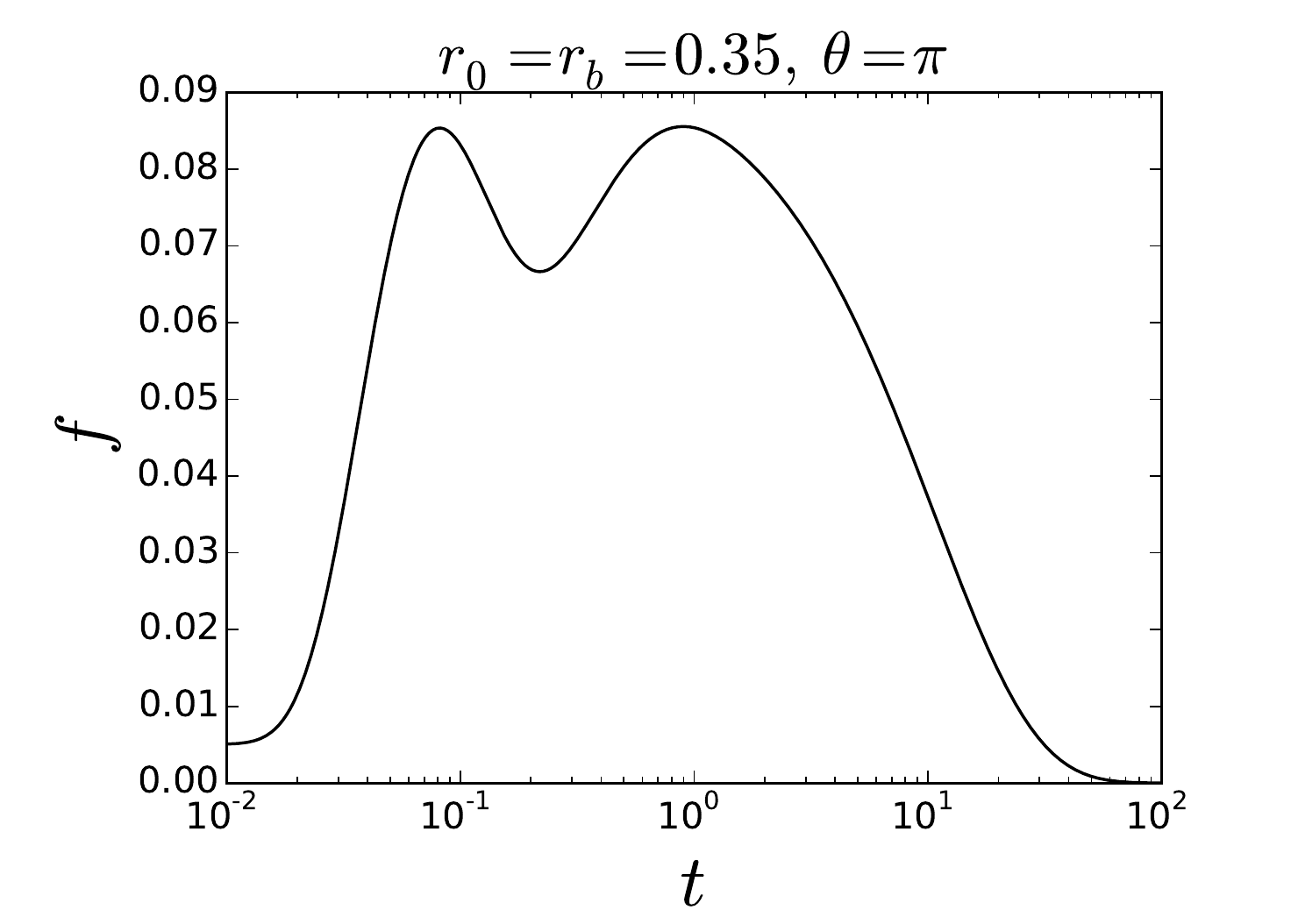}
    \label{fig:asymSDa}
  }
  \subfigure[][The joint distribution function $p(\br, t)$.]
  {
    \includegraphics[width=14cm]{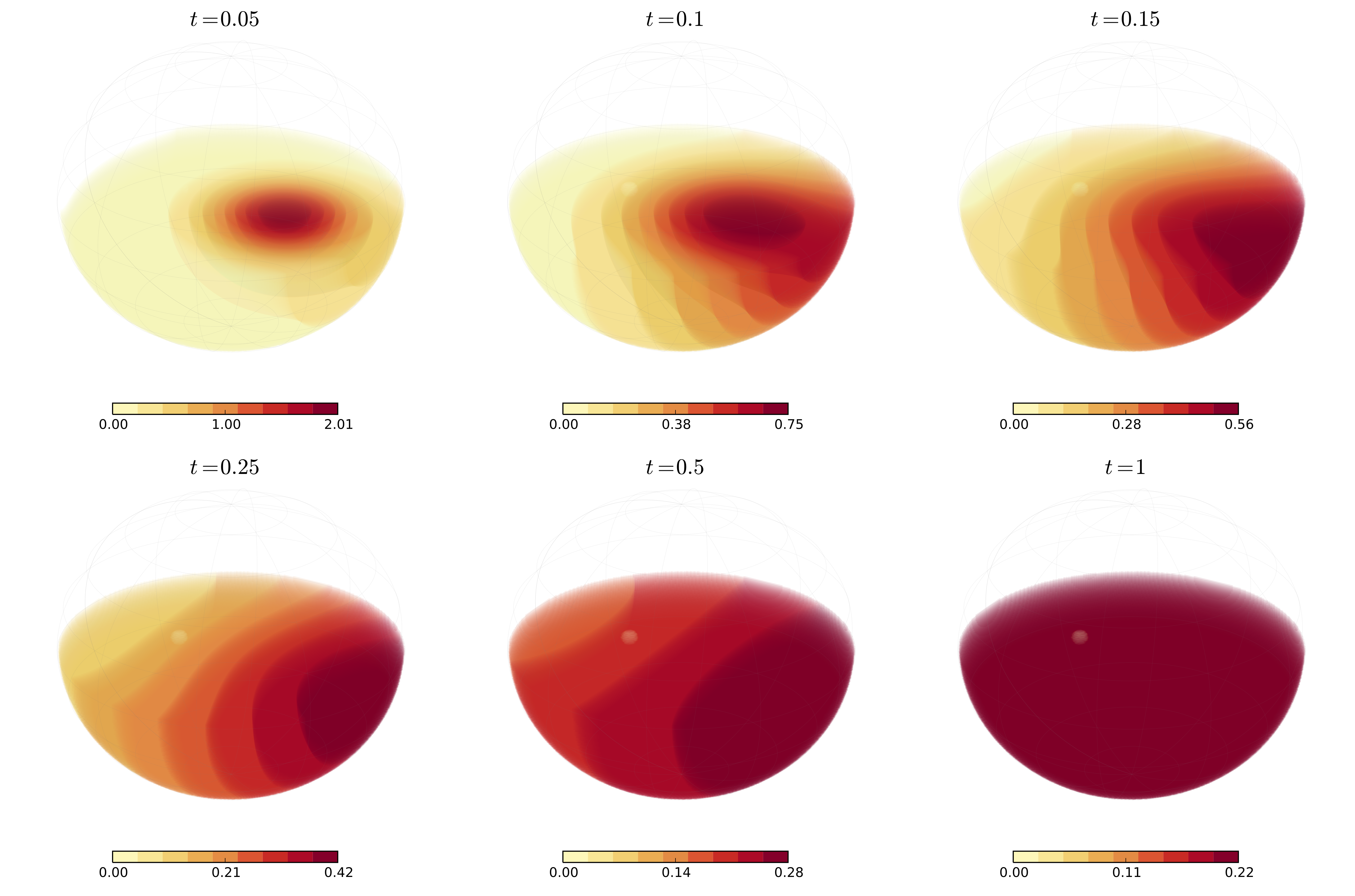}
    \label{fig:asymSDb}
  }
  \caption{The approximation for the (a) first passage time density
    $f(t)$ and (b) joint distribution $p(\br, t)$, where
    ${\rm Prob}[\mathbf{R}(t)\in \mathcal{B}_{d\br}(\br), t < T] =
    p(\br ,t)d\br$.
    The target radius is $\epsilon = 0.03$. The target and initial
    position are placed along the center line $\theta_{b} = \pi$,
    $ \theta_{0} = 0$ at a distance of $r_{b} = r_{0} = 0.35$.  }
  \label{fig:asymSD}
\end{figure}

\section{Discussion}

Including the occurrence of non-reactive encounters in bimolecular
reaction models is a common and useful modeling tool for many complex
biochemical reactions and cellular processes.  Building upon our work
on the pure-absorption reaction mechanism \cite{IsaacsonNewby2013}, we
have developed comparable expansions for two partially-absorbing
reaction mechanisms.  Our approach extends the method of matched
asymptotics as developed in~\cite{WardCheviakov2011} for estimating
long-time asymptotic expansions, developing full uniform in time
expansions of the solution to the underlying diffusion equation. This
provides a versatile method that is complimentary to the
pseudopotential approach we previously used for the pure absorption
problem~\cite{IsaacsonNewby2013}.  The approximations of the two
partial reaction mechanisms differ by a single constant: the diffusion
limited reaction rates given by~\eqref{eq:doiDiffLimRate}
and~\eqref{eq:robDiffLimRate}.

The results presented here apply to general three-dimensional domains
provided that the solution to the unperturbed problem is known.
That is, to obtain the asymptotic approximation for a
  specific domain geometry, the primary necessary ingredient is the
  solution to \eqref{eq:smolPDEModel} with no absorbing target (i.e.,
  $\epsilon=0$).  It is also of interest to obtain similar results
for a Brownian search in two dimensions, since many biochemical
reactions occur between membrane bound molecules
\cite{holcman04c,bressloff07a}.  Unfortunately, in two-dimensions the
asymptotic series contains $\frac{1}{\log \epsilon}$ terms that
converge slowly as $\epsilon \to 0$.  It may be possible to sum all of
the logarithmic terms in the expansion and obtain an accurate
approximation using methods similar to those in Ref.~\cite{ward93b}.

Another possible extension arises from observing that the
  matched asymptotic expansion method
  of~\cite{ward93a,WardCheviakov2011}, which we used to derive the
  expansion of the large time component, does not require a spherical
  target. For non-spherical targets the inner problem is generally no
  longer exactly solvable, however, as described
  in~\cite{WardCheviakov2011} one can develop far-field expansions
  that are sufficient for matching inner and outer expansions. The
  inner solution behavior for different shapes, when ``far'' from the
  target, is accounted for by one parameter in the far-field
  expansions: the effective capacitance of the target object.  It is
  an open problem to develop such expansions for the Doi reaction
  model with non-spherical targets, and it would be interesting to
  understand what differences the resulting Doi model expansions have
  from corresponding expansions of the solution to the
  Smoluchowski--Collins--Kimball model.

Many other extensions are possible given the breath of previous work
on the large time approximation over the past few decades.  
  If the domain contains multiple targets, splitting probabilities and
  conditional first passage times have been studied within the long
  time framework \cite{chevalier11a,cheviakov10b}. The molecular motor
  mediated transport of viruses toward the cell nucleus is another
  example of a random target search problem, with added complexity
  that molecular motors move randomly with a directed bias toward the
  cell center \cite{lagache07b}. We anticipate that as long as the
  probability flux around the target is approximately constant, our
  approximation framework should hold for search problems with drift.

  Three dimensional search processes may also have active interactions
  with the domain boundary. The walker may be allowed to randomly
  stick to the domain boundary and diffuse along the 2D surface
  \cite{benichou10a}, as in a recent model of a T-cell searching for
  lymph nodes \cite{delgado15a}. It is also common for reactions to
occur between molecules in the cytosol and membrane-bound proteins. In
mathematical models these processes give rise to narrow escape
problems \cite{schuss07a,cheviakov10b,pillay10a}.  The approximation
obtained here breaks down as the target approaches the boundary,
however, it is likely that a similar matched asymptotics procedure
could be used to obtain uniform in time asymptotic expansions of
solutions to the underlying diffusion equation for narrow escape
problems.

\section{Acknowledgments}
SAI and AJM were partially supported by National Science Foundation
awards DMS-0920886 and DMS-1255408. JMN was supported by a NSF-funded
postdoctoral fellowship (NSF DMS-1100281, DMS-1462992).  SAI and JMN
would like to thank the Isaac Newton Institute for Mathematical
Sciences, Cambridge, for support and hospitality during the programme
on Stochastic Dynamical System in Biology, where work on this paper
was undertaken. This work was supported by EPSRC grant no
EP/K032208/1.  SAI was also partially supported by a grant from the
Simons Foundation to attend the programme as a visiting Simons Fellow.

\appendix

\section{Finiteness of~\eqref{eq:finiteInt}} \label{ap:finiteInt}
In this section we show 
\begin{equation} \label{eq:eqToShow}
  \lim_{\br \to \brb} \abs{\int_{\Omega} \lG(\br,\br',s) p_0(\br') \, d\br'} < \infty, \quad \forall s > 0,
\end{equation}
where
\begin{equation*}
  p_0(\br) = \delta(\br-\br_0) - \frac{1}{\abs{\Omega}} - \epsilon w^{(1)}(\br,\br_0) - \epsilon^2 w^{(2)}(\br,\br_0).
\end{equation*}
We show this identity holds for each order term in $\epsilon$
separately. At $O(1)$ we have
\begin{equation*}
  \int_{\Omega} \lG(\br,\br',s) \brac{\delta(\br-\br_0) - \frac{1}{\abs{\Omega}}} \, d\br'
  = \lG(\br,\br_0,s) - \frac{1}{s \abs{\Omega}},
\end{equation*}
where we have used that
\begin{align*}
  \int_{\Omega} \lG(\br,\br',s) \, d\br' &= \int_0^{\infty} e^{-st} \brac{\int_{\Omega} G(\br,\br',t) \, d\br'} \, ds \\
                                          &= \int_0^{\infty} e^{-st} \, ds, \\
                                          &= \frac{1}{s}.
\end{align*}
Here the first line follows by the non-negativity of $G$ and Fubini's
theorem. As $s \to 0$ we have~\cite{IsaacsonNewby2013}
\begin{equation*}
  \lim_{s \to 0} \paren{\lG(\br,\br_0,s) - \frac{1}{s \abs{\Omega}}} = U(\br,\br_0).
\end{equation*}
We therefore conclude that for $\brb \neq \br_0$, the $O(1)$
contribution to~\eqref{eq:eqToShow} is bounded for each $s \geq 0$.

Recalling~\eqref{eq:f1}, at $O(\epsilon)$ the behavior is determined by two integrals.
The first is
\begin{equation*}
  U(\br_0, \brb) \int_{\Omega} \lG(\br,\br',s) \, d\br' = \frac{U(\br_0,\brb)}{s},
\end{equation*}
which is clearly bounded for each $s > 0$. The second is
\begin{align}  
  \int_{\Omega} \lG(\br,\br',s) U(\br',\brb) \, d\br' 
  &= \mathcal{L}\brac{\int_{t}^{\infty} \paren{G(\br,\br',z) - \frac{1}{\abs{\Omega}}} \, dz}, \notag \\
  &= \frac{1}{s} \brac{U(\br,\brb) - \lG(\br,\brb,s) + \frac{1}{s \abs{\Omega}}}, \label{eq:midEq} \\
  &= \frac{1}{s} \brac{R_0(\br,\brb) - \paren{\lR(\br,\brb,s) - \frac{1}{s \abs{\Omega}}} 
    + \frac{1 - e^{-\abs{\br-\brb}\sqrt{s/D}}}{\kshat \abs{\br-\brb}}}, \label{eq:OofEpsBnd}
\end{align}
where the identity in the first line is
from~\cite{IsaacsonNewby2013}. For each fixed $s > 0$, we see that the
term in brackets is finite as $\br \to \brb$.

Finally, examining~\eqref{eq:f2}, at $O(\epsilon^2)$ the only new 
term that appears is proportional to
\begin{multline*}
  \int_{\Omega} \lG(\br,\br',s) \brac{\int_{\Omega} U(\br',\br'') U(\br'',\brb) \, d\br''} \, d \br' \\
  = \frac{1}{s} \brac{ \int_{\Omega} U(\br,\br'') U(\br'',\brb) \, d \br'' 
    - \int_{\Omega} \lG(\br,\br'',s) U(\br'',\brb) \, d\br''},
\end{multline*}
where we have switched the order of integration and
used~\eqref{eq:midEq} (with $\brb$ replaced by $\br''$).  Here changing
the order of integration can be justified by expanding each term into
their regular and singular parts to verify the integrand is absolutely
integrable on the product space, and then applying Fubini's
theorem. The second integral on the right hand side is finite as
$\br \to \brb$ by~\eqref{eq:OofEpsBnd}. Moreover,
\begin{align*}
  \int_{\Omega} U(\br,\br') U(\br',\brb) \, d \br' = \int_{\Omega} \brac{ 
  \paren{R_0(\br,\br') + \frac{1}{\kshat \abs{\br - \br'}}} 
  \paren{R_0(\br',\brb) + \frac{1}{\kshat \abs{\br' - \brb}}}} \, d\br'
\end{align*}
can be seen to be finite as $\br \to \brb$ by changing to spherical
coordinates about $\brb$ and noting the effective singularity 
within the integral is integrable (like $\abs{\br-\brb}^{-2}$ as
$\br \to \brb$).

\section{Proof of Lemma~\ref{lemma:pepsProps}} \label{ap:lemmaPf}

\begin{proof}
  For $0 \leq \epsilon_1 \leq \epsilon_2$ let
  \begin{equation*}
    v(\br,t) = p_{\epsilon_1}(\br,t) - p_{\epsilon_2}(\br,t),
  \end{equation*}
  and denote by $\ind_{\Omega_{\epsilon}}(\br)$ the
  indicator function on $\Omega_{\epsilon}$. Then
  \begin{equation} \label{eq:veq}
    \PD{v}{t} = D \lap v - \mu \ind_{\Omega_{\epsilon_1}}(\br) v 
    + \mu \brac{\ind_{\Omega_{\epsilon_2}}(\br) - \ind_{\Omega_{\epsilon_1}}(\br)} p_{\epsilon_2}(\br,t).
  \end{equation}
  Let $G_{\epsilon}(\br,\br',t)$ denote the Green's function solving
  \begin{align*}
    \PD{G_{\epsilon}}{t} &= D \lap G_{\epsilon} - \mu \ind_{\Omega_{\epsilon}}(\br) G_{\epsilon}(\br,\br',t), \quad \br \in \Omega, t > 0,\\
    \partial_{\veta} G_{\epsilon}(\br,\br',t) &= 0, \quad \br \in \partial \Omega, t > 0, \\
    G_{\epsilon}(\br,\br',0) &= \delta(\br - \br'), \quad \br \in \Omega, \br' \in \Omega.
  \end{align*}
  By Duhamel's Principle we may write
  \begin{equation*}
    v(\br,t) = \int_0^t \int_{\Omega} G_{\epsilon_1}(\br,\br',t-s) f(\br',s) \, d\br' \, ds,
  \end{equation*}
  where $f(\br',s)$ is given by
  \begin{equation*}
    f(\br',s) := \mu \brac{\ind_{\Omega_{\epsilon_2}}(\br') - \ind_{\Omega_{\epsilon_1}}(\br')} p_{\epsilon_2}(\br',s),
  \end{equation*}
  and $f(\br',s) \geq 0$ as
  $\Omega_{\epsilon_1} \subset \Omega_{\epsilon_2}$.  Since
  $G_{\epsilon_1}(\br,\br',t) \geq 0$ we may conclude that
  \begin{equation*}
    v(\br,t) \geq 0,
  \end{equation*}
  giving~\eqref{eq:doiSolutMonotone}. 

  Finally, the inequality~\eqref{eq:gaussianBnd} is just a version of
  Theorem 2.3 of~\cite{BassHsu1991} adapted for Lipschitz domains, see
  Remark 3.11 of~\cite{BassHsu1991}.
\end{proof}

\section{Monte Carlo dynamic-lattice simulations} \label{ap:MCsims}

For comparison with the asymptotic expansions,
we perform Monte Carlo simulations of the diffusion of a molecule to 
a spherical target at various locations within a spherical domain. The 
continuous motion of the diffusing molecule is approximated as a 
continuous-time random walk on lattice points. As will be described 
in more detail below, we allow the lattice to change dynamically 
to conform with the domain boundary or with the reactive boundary of 
the target. (The method used here is similar to the dynamic lattice 
version of the First-Passage Kinetic Monte Carlo (FPKMC) method 
in~\cite{MauroDLFPKMC2014,MauroThesis2014}.
However, here we do not use ``protective domains'' as are used
in FPKMC, since there is only one diffusing molecule.)

The jump rate from a lattice point $\bx_i$ to a neighboring lattice
point $\bx_j$ is given by ${D}/{h^2}$, where $D$ is the diffusion
coefficient and $h = | \bx_i - \bx_j |$ is the lattice spacing. This jump
rate agrees with the standard second-order-accurate discretization
of the Laplacian and has commonly been used in Reaction-Diffusion
Master Equation (RDME) simulations
as the jump rate between neighboring voxels~\cite{IsaacsonSJSC2006, 
IsaacsonPNAS2011, IsaacsonXrayBMB2013}.

Throughout the course of each simulation, the lattice is only defined 
locally.
When the diffusing molecule is \emph{not} near the target or the outer boundary
of the domain, 
we set the lattice spacing equal to a specified value $h^{\textrm{max}}_1$. 
Near the outer boundary of the domain, the local lattice spacing $h$
will take on values less than or equal to $h^{\textrm{max}}_1$
and is chosen in such a way as to
enforce the reflecting boundary condition 
(described in more detail below).
When the molecule is near the target, a finer 
lattice spacing is used. Specifically, when the distance
from the diffusing molecule to the target is less 
than $2 h^{\textrm{max}}_1$, then
the lattice spacing is chosen to be less than or 
equal to another specified
value $h^{\textrm{max}}_2 \le h^{\textrm{max}}_1$.
In Figures~\ref{fig:asym2} and~\ref{fig:asym1}, 
$h^{\textrm{max}}_1 = 0.02$ and 
$h^{\textrm{max}}_2 = 0.005$.

{\em Pure-absorption and Doi reaction mechanisms:}
First, we consider the cases of the Smoluchowski model 
with the pure-absorption Dirichlet reactive boundary or the Doi model.
When the diffusing molecule is close enough to the target that a single hop
of length $h^{\textrm{max}}_2$ could take the molecule to within the
target radius, then the local lattice spacing is chosen to be a value 
$h < h^{\textrm{max}}_2$. This $h$ is chosen so that a single hop
may take the diffusing molecule exactly onto the target boundary, 
but not within the target (similar to Figure 3.3 and the left panel of Figure 3.4
in  Ref.~\cite{MauroThesis2014}). In the Dirichlet case, the simulation ends
immediately when the diffusing molecule reaches the target boundary.
In the Doi case, upon reaching the target boundary, a new local lattice of
spacing $h^{\textrm{max}}_2$ is defined. The molecule may then hop
within the target or away from the target. When the distance from the diffusing
molecule to the target center is less than or equal to the target radius, then 
the molecule may react with some probability per unit time given by the
Doi reaction rate parameter. 
When the diffusing molecule is within the target, but has not yet reacted,
 and its distance to the target
boundary is small enough that a single hop of length $h^{\textrm{max}}_2$ 
could take the molecule outside the target, then the local lattice spacing is 
chosen to be a value $h < h^{\textrm{max}}_2$ such that a single hop
may take the diffusing molecule exactly onto the target boundary, 
but not outside the target. If the molecule hops to the target boundary, then
the lattice spacing returns to $h^{\textrm{max}}_2$, and the molecule may
again hop either away from or into the target.

{\em Partial-absorption reaction mechanism:} In the case of the
Smoluchowski--Collins--Kimball model with the partial-absorption Robin
reactive boundary, a modified jump rate is used when the diffusing
molecule is near the target.  When the diffusing molecule is within a
distance $h^{\textrm{max}}_2 / 2$ from the target boundary, the
lattice rotates to be perpendicular to the target boundary.  The local
lattice is defined so that the target boundary is exactly centered
between two lattice points (similar to Figure 3.4, right panel, of
Ref.~\cite{MauroThesis2014}). The jump rate at the Robin boundary is
$\dfrac{2 D \gamma}{h (2 D + \gamma \, h)}$, where $\gamma$
is the Robin constant.  We derived this jump rate following the
approach in Appendix C of Ref.~\cite{ElstonPeskinJTB2003}, but with
their Dirichlet boundary condition replaced by a Robin boundary
condition.

{\em Outer domain boundary:}
The lattice also changes dynamically if the diffusing molecule is near the
outer reflecting boundary of the overall simulation domain. 
Since the jump rate of  ${D}/{h^2}$ gives a coordinate-wise discretization, 
a different value of $h$ can be used in each coordinate direction.
The $h$ in each coordinate will have a value less than
or equal to $h^{\textrm{max}}_1$, and is chosen so that the domain 
boundary lies exactly halfway between two lattice points (similar to 
Figure 3.5, right panel, of Ref.~\cite{MauroThesis2014}). Then, the
no-flux boundary condition is enforced by having a jump rate of 
zero across the boundary.

In Figs.~\ref{fig:asym2} and~\ref{fig:asym1}, each subplot is based on
$N=10^5$ simulations for each of the three models. 
The histograms are obtained by 
binning the first passage time data from the simulations 
into intervals that are evenly spaced on a log scale.
The error bars represent approximate $95 \%$ confidence intervals.
Let $\Delta t_i$ be the width of the $i^{th}$ bin, 
$N_i$ the number of observations in the $i^{th}$ bin, and
$p_i = N_i / N$. The error bars plotted at the center (on the log scale) 
of each bin are given by
$ p_i / \Delta t_i \ \pm \ 1.96 \sqrt{(p_i)(1-p_i)} / (\sqrt{N} \Delta t_i) $.

\bibliographystyle{IEEEtran}
\bibliography{lib}

\begin{thebibliography}{10}
\providecommand{\url}[1]{#1}
\csname url@samestyle\endcsname
\providecommand{\newblock}{\relax}
\providecommand{\bibinfo}[2]{#2}
\providecommand{\BIBentrySTDinterwordspacing}{\spaceskip=0pt\relax}
\providecommand{\BIBentryALTinterwordstretchfactor}{4}
\providecommand{\BIBentryALTinterwordspacing}{\spaceskip=\fontdimen2\font plus
\BIBentryALTinterwordstretchfactor\fontdimen3\font minus
  \fontdimen4\font\relax}
\providecommand{\BIBforeignlanguage}[2]{{%
\expandafter\ifx\csname l@#1\endcsname\relax
\typeout{** WARNING: IEEEtran.bst: No hyphenation pattern has been}%
\typeout{** loaded for the language `#1'. Using the pattern for}%
\typeout{** the default language instead.}%
\else
\language=\csname l@#1\endcsname
\fi
#2}}
\providecommand{\BIBdecl}{\relax}
\BIBdecl

\bibitem{IsaacsonNewby2013}
S.~A. Isaacson and J.~Newby, ``Uniform asymptotic approximation of diffusion to
  a small target,'' \emph{Phys Rev E}, vol.~88, no.~1, p. 012820 (13pp), 2013.

\bibitem{DoiSecondQuantA}
M.~Doi, ``Second quantization representation for classical many-particle
  system,'' \emph{J. Phys. A: Math. Gen.}, vol.~9, no.~9, pp. 1465--1477, 1976.

\bibitem{DoiSecondQuantB}
------, ``Stochastic theory of diffusion-controlled reaction,'' \emph{J. Phys.
  A: Math. Gen.}, vol.~9, no.~9, pp. 1479--1495, 1976.

\bibitem{TeramotoDoiModel1967}
E.~Teramoto and N.~Shigesada, ``Theory of bimolecular reaction processes in
  liquids,'' \emph{Prog. Theor. Phys.}, vol.~37, no.~1, pp. 29--51, 1967.

\bibitem{SmoluchowskiDiffLimRx}
M.~V. Smoluchowski, ``Mathematical theory of the kinetics of the coagulation of
  colloidal solutions,'' \emph{Z. Phys. Chem.}, vol.~92, pp. 129--168, 1917.

\bibitem{CollinsKimballPartialAdsorp}
F.~C. Collins and G.~Kimball, ``Diffusion-controlled reaction rates,'' \emph{J.
  Colloid. Sci.}, vol.~4, no.~4, pp. 425--437, 1949.

\bibitem{PrustelAreaReactModel14}
T.~Pr\"{u}stel and M.~Meier-Schellersheim, ``The area reactivity model of
  geminate recombination,'' \emph{J. Chem. Phys.}, vol. 140, no.~11, p. 114106,
  2014.

\bibitem{ward93a}
M.~J. Ward and J.~B. Keller, ``Strong localized perturbations of eigenvalue
  problems,'' \emph{{SIAM} J. Appl. Math}, vol.~53, no.~3, pp. 770--798, Jun
  1993.

\bibitem{cheviakov10b}
A.~F. Cheviakov, M.~J. Ward, and R.~Straube, ``An asymptotic analysis of the
  mean first passage time for narrow escape problems: Part ii: The sphere,''
  \emph{Multiscale Model. Simul.}, vol.~8, no.~3, pp. 836--870, 2010.

\bibitem{LindsayPRE2015}
\BIBentryALTinterwordspacing
A.~E. Lindsay, T.~Kolokolnikov, and J.~C. Tzou, ``Narrow escape problem with a
  mixed trap and the effect of orientation,'' \emph{Phys. Rev. E}, vol.~91, p.
  032111, Mar 2015. [Online]. Available:
  \url{http://link.aps.org/doi/10.1103/PhysRevE.91.032111}
\BIBentrySTDinterwordspacing

\bibitem{WardCheviakov2011}
A.~F. Cheviakov and M.~J. Ward, ``Optimizing the principal eigenvalue of the
  {L}aplacian in a sphere with interior traps,'' \emph{Mathematical and
  Computer Modelling}, vol.~53, no. 7-8, pp. 1394--1409, Apr. 2011.

\bibitem{Chevalier2010ii}
C.~Chevalier, O.~B{\'e}nichou, B.~Meyer, and R.~Voituriez, ``First-passage
  quantities of {B}rownian motion in a bounded domain with multiple targets: a
  unified approach,'' \emph{J. Phys. A-Math. and Theor.}, vol.~44, no.~2, p.
  025002, Dec. 2010.

\bibitem{KeizerJPhysChem82}
J.~Keizer, ``Nonequilibrium statistical thermodynamics and the effect of
  diffusion on chemical reaction rates,'' \emph{J. Phys. Chem.}, vol.~86, pp.
  5052--5067, 1982.

\bibitem{IsaacsonErban2015}
S.~J. Chapman, R.~Erban, and S.~A. Isaacson, ``Reactive boundary conditions as
  limits of interaction potentials for {B}rownian and {L}angevin {D}ynamics,''
  2015, arXiv:1507.03621 [physics.bio-ph].

\bibitem{IsaacsonAgbanusi13}
I.~C. Agbanusi and S.~A. Isaacson, ``A comparison of bimolecular reaction
  models for stochastic reaction-diffusion systems,'' \emph{Bull. Math. Biol.},
  vol.~76, no.~4, pp. 922--946, 2013.

\bibitem{ErbanChapman2009}
R.~Erban and S.~J. Chapman, ``Stochastic modelling of reaction-diffusion
  processes: algorithms for bimolecular reactions,'' \emph{Phys. Biol.},
  vol.~6, no.~4, p. 046001, Aug. 2009.

\bibitem{BassHsu1991}
R.~F. Bass and P.~Hsu, ``Some potential theory for reflecting {B}rownian
  {M}otion in {H}\"{o}lder and {L}ipshitz domains,'' \emph{The Annals of
  Probability}, vol.~19, no.~2, pp. 486--508, 1991.

\bibitem{carslaw59a}
H.~S. Carslaw and J.~C. Jaeger, \emph{Conduction of heat in solids},
  2nd~ed.\hskip 1em plus 0.5em minus 0.4em\relax Oxford: Clarendon Press, 1959.

\bibitem{HandbookOfPDEs}
A.~D. Polyanin, \emph{Handbook of Linear Partial Differential Equations for
  Engineers and Scientists}.\hskip 1em plus 0.5em minus 0.4em\relax Chapman \&
  Hall/{CRC}, 2001.

\bibitem{holcman04c}
D.~Holcman and Z.~Schuss, ``Escape through a small opening: Receptor
  trafficking in a synaptic membrane,'' \emph{J. Stat. Phys.}, vol. 117, no.
  5-6, pp. 975--1014, Dec 2004.

\bibitem{schuss07a}
Z.~Schuss, A.~Singer, and D.~Holcman, ``The narrow escape problem for diffusion
  in cellular microdomains,'' \emph{PNAS}, vol. 104, no.~41, pp.
  16\,098--16\,103, Oct 2007.

\bibitem{pillay10a}
S.~Pillay, M.~J. Ward, A.~Peirce, and T.~Kolokolnikov, ``An asymptotic analysis
  of the mean first passage time for narrow escape problems: Part i:
  Two-dimensional domains,'' \emph{Multiscale Model. Simul.}, vol.~8, no.~3,
  pp. 803--835, 2010.

\bibitem{guerin16a}
T.~Gu{\'e}rin, N.~Levernier, O.~B{\'e}nichou, and R.~Voituriez, ``Mean
  first-passage times of non-markovian random walkers in confinement,''
  \emph{Nature}, vol. 534, no. 7607, pp. 356--359, 2016.

\bibitem{mattos12a}
T.~G. Mattos, C.~Mej{\' i}a-Monasterio, R.~Metzler, and G.~Oshanin, ``First
  passages in bounded domains: When is the mean first passage time
  meaningful?'' \emph{Phys. Rev. E}, vol.~86, p. 031143, Sep 2012.

\bibitem{benichou10b}
O.~B{\'e}nichou, C.~Chevalier, J.~Klafter, B.~Meyer, and R.~Voituriez,
  ``Geometry-controlled kinetics,'' \emph{Nature chemistry}, vol.~2, no.~6, pp.
  472--477, 2010.

\bibitem{condamin07a}
S.~Condamin, O.~B{\'e}nichou, V.~Tejedor, R.~Voituriez, and J.~Klafter,
  ``First-passage times in complex scale-invariant media,'' \emph{Nature}, vol.
  450, no. 7166, pp. 77--80, 2007.

\bibitem{bressloff07a}
P.~C. Bressloff, B.~A. Earnshaw, and M.~J. Ward, ``Diffusion of protein
  receptors on a cylindrical dendritic membrane with partially absorbing
  traps,'' \emph{{SIAM} J. Appl. Math.}, vol.~68, no.~5, pp. 1223--1246, 2007.

\bibitem{ward93b}
M.~Ward, W.~Heshaw, and J.~Keller, ``Summing logarithmic expansions for
  singularly perturbed eigenvalue problems,'' \emph{{SIAM} J. Appl. Math},
  vol.~53, no.~3, pp. 799--828, Jun 1993.

\bibitem{chevalier11a}
C.~Chevalier, O.~B\'enichou, B.~Meyer, and R.~Voituriez, ``First-passage
  quantities of brownian motion in a bounded domain with multiple targets: a
  unified approach,'' \emph{Journal of Physics A: Mathematical and
  Theoretical}, vol.~44, no.~2, p. 025002, 2011.

\bibitem{lagache07b}
T.~Lagache and D.~Holcman, ``Effective motion of a virus trafficking inside a
  biological cell,'' \emph{{SIAM} J. Appl. Math.}, vol.~68, no.~4, pp.
  1146--1167, 2007.

\bibitem{benichou10a}
O.~B{\'e}nichou, D.~Grebenkov, P.~Levitz, C.~Loverdo, and R.~Voituriez,
  ``Optimal reaction time for surface-mediated diffusion,'' \emph{Physical
  review letters}, vol. 105, no.~15, p. 150606, 2010.

\bibitem{delgado15a}
M.~I. Delgado, M.~J. Ward, and D.~Coombs, ``Conditional mean first passage
  times to small traps in a 3-d domain with a sticky boundary: Applications to
  t cell searching behavior in lymph nodes,'' \emph{Multiscale Modeling \&
  Simulation}, vol.~13, no.~4, pp. 1224--1258, 2015.

\bibitem{MauroDLFPKMC2014}
A.~J. Mauro, J.~K. Sigurdsson, J.~Shrake, P.~J. Atzberger, and S.~A. Isaacson,
  ``A first-passage kinetic {M}onte {C}arlo method for reaction-drift-diffusion
  processes,'' \emph{J. Comp. Phys.}, vol. 259, pp. 536--567, 2014.

\bibitem{MauroThesis2014}
\BIBentryALTinterwordspacing
A.~J. Mauro, ``Numerical methods and stochastic simulation algorithms for
  reaction-drift-diffusion systems,'' Ph.D. dissertation, Boston University,
  2014, available at http://search.proquest.com/docview/1629025381. [Online].
  Available: \url{http://search.proquest.com/docview/1629025381}
\BIBentrySTDinterwordspacing

\bibitem{IsaacsonSJSC2006}
S.~A. Isaacson and C.~S. Peskin, ``Incorporating diffusion in complex
  geometries into stochastic chemical kinetics simulations,'' \emph{{SIAM} J.
  Sci. Comput.}, vol.~28, no.~1, pp. 47--74, 2006.

\bibitem{IsaacsonPNAS2011}
S.~A. Isaacson, D.~M. McQueen, and C.~S. Peskin, ``The influence of volume
  exclusion by chromatin on the time required to find specific {DNA} binding
  sites by diffusion,'' \emph{PNAS}, vol. 108, no.~9, pp. 3815--3820, Mar.
  2011.

\bibitem{IsaacsonXrayBMB2013}
S.~A. Isaacson, C.~A. Larabell, M.~A. Le~Gros, D.~M. McQueen, and C.~S. Peskin,
  ``The influence of spatial variation in chromatin density determined by
  {X}-ray tomograms on the time to find {DNA} binding sites,'' \emph{Bull.
  Math. Bio.}, vol.~75, no.~11, 2013.

\bibitem{ElstonPeskinJTB2003}
H.~Wang, C.~S. Peskin, and T.~C. Elston, ``A robust numerical algorithm for
  studying biomolecular transport processes,'' \emph{J. Theor. Biol.}, vol.
  221, pp. 491--511, 2003.

\end{thebibliography}

\end{document}